\DeclareMathOperator{\reach}{reach}
\DeclareMathOperator{\true}{TRUE}
\DeclareMathOperator{\false}{FALSE}
\DeclareMathOperator{\maxtime}{maxtime}
\DeclareMathOperator{\out}{out}
\DeclareMathOperator{\opt}{opt}
\newcommand{\tempOrd}{\textsc{Min-Max Reachability Temporal Ordering}}
\newcommand{\singTempOrd}{\textsc{Singleton} \tempOrd}
\newtheorem{obs}{Observation}
\newtheorem*{obs*}{Observation}
\newtheorem*{theorem*}{Theorem}
\newtheorem*{lemma*}{Lemma}
\newtheorem*{proposition*}{Proposition}
\newtheorem{prop}{Proposition}
\newtheorem{thm}{Theorem}
\newtheorem{lma}{Lemma}
\newtheorem{cor}{Corollary}
\newtheorem{claim}{Claim}
\newtheorem{conjecture}{Conjecture}
\newtheorem{openproblem}{Open Problem}
\title{Assigning times to minimise reachability in temporal graphs}
\date{August 2020}
\author[1]{Jessica Enright\thanks{Partially supported by EPSRC Project EP/P026842/1: Modelling and Optimisation with Graphs.}}
\author[2]{Kitty Meeks\thanks{Supported by a Royal Society of Edinburgh Personal Research Fellowship, funded by the Scottish Government.}}
\author[3]{Fiona Skerman\thanks{Supported by grants from Swedish Research Council and the Ragnar S\"oderberg Foundation.}}
\affil[1]{\small{Global Academy of Agriculture and Food Security, University of Edinburgh, Edinburgh, UK\\ \textit{Current address:} School of Computing Science, University of Glasgow, Glasgow, UK \\ \textit{Email:} \texttt{jessica.enright@glasgow.ac.uk} \vspace{6pt}}}
\affil[2]{School of Computing Science, University of Glasgow, Glasgow, UK\\ \textit{Email:} \texttt{kitty.meeks@glasgow.ac.uk} \vspace{6pt}}
\affil[3]{Department of Mathematics, Uppsala University, Uppsala, Sweden\\ \textit{Email:} \texttt{skerman@fi.muni.cz} }
\begin{document}

\maketitle

\begin{abstract}
Temporal graphs (in which edges are active at specified times) are of particular relevance for spreading processes on graphs, e.g.~the spread of disease or dissemination of information.   Motivated by real-world applications, modification of static graphs to control this spread has proven a rich topic for previous research.

Here, we introduce a new type of modification for temporal graphs: the number of active times for each edge is fixed, but we can change the relative order in which (sets of) edges are active.  

We investigate the problem of determining an ordering of edges that minimises the maximum number of vertices reachable from any single starting vertex; epidemiologically, this corresponds to the worst-case number of vertices infected in a single disease outbreak.

We study two versions of this problem, both of which we show to be $\NP$-hard, and identify cases in which the problem can be solved or approximated efficiently.
\end{abstract}

\section{Introduction}

Temporal (or dynamic) graphs have emerged recently as a useful structure for representing real-world situations, and as a rich source of new algorithmic problems \cite{akrida17,himmel17,temporalReview,kempe02,liang17,mertzios13,temporalReview1}.  A temporal graph changes over time: each edge in the graph is only \emph{active} at certain timesteps.  Some notions from the study of static graphs transfer immediately to the temporal setting, but others -- including the very basic notions of connectivity and reachability -- become much more complex in the temporal setting.

Temporal graphs are therefore of particular interest when considering the dynamics of spreading processes on graphs, for example the spread of a disease or the dissemination of sensitive information.  The maximum number of vertices that can be reached from a single starting vertex $v$, known as the \emph{reachability} of $v$, has emerged as an important measure in both epidemiology \cite{edge-deletion,temp-edge-del,li11} and the study of network vulnerability \cite{drange14,gross13}; note that in the case of static, undirected graphs the reachability of $v$ is equal to the number of vertices in the connected component containing $v$.

Previous work on this topic has addressed questions of the following form: given a set of rules for how an input graph $G$ can be modified, how small can we make the maximum reachability taken over all vertices of $G$?  The answer to these questions can be interpreted as the effectiveness of an optimal intervention to control a spreading process on the graph, or the worst-case impact of an attack on a graph's reliability.  Modification rules considered in the literature include vertex deletions, edge deletions, and the deletion of edges at some budgeted number of times.  The related optimisation problems of deleting edge-times to minimise various cost measures while ensuring that the set of vertices reached from every vertex remains unchanged have also been studied \cite{akrida17,mertzios13} (the notion of ``assignment'' of times to edges introduced in \cite{mertzios13} can be regarded as an edge-time deletion problem when all edges are initially active at all times).

Graphs of livestock trades present us with an example in which it is desireable to reduce the maximum reachability of a graph, as discussed in \cite{edge-deletion,temp-edge-del}.  Here, vertices represent farms, and the sale of an animal from one farm to another (together with the associated risk of disease transmission) is naturally represented with a (directed) edge.  If a disease incursion starts at the farm represented by $v$, then the farms at risk of infection are precisely those in the reachability set of $v$, and it is natural to try to minimise the worst case number of farms that might be infected.

It is not clear, however, how one might realistically remove edges from such a graph: forbidding trade is likely impossible.   A more realistic hope is that the set of edges (e.g.~the trades that take place) is fixed, but we can intervene to alter the relative times at which the edges are active (i.e.~reorder the trades, via alterations in sale and auction dates).

The focus of this paper is on precisely this kind of temporal graph modification: we cannot remove edges, but we can choose the order in which they are active.  (In contrast with the model of \cite{mertzios13}, the number of edge-times to be assigned to each edge is fixed.)  In the simplest version of the problem, we can reorder edges with complete freedom, but for many applications there are likely to be additional constraints which stop us rescheduling edges independently.  Specifically, we may require that particular subsets of edges are all active simultaneously: this corresponds, for example, to a set of contacts that will take place at a particular conference, or the trades that will be made at a specific named livestock market event (e.g.~the ``Spring Bull Sale''), whenever the event in question is scheduled.  Indeed, scenarios of this kind where the timing of contacts can be controlled by an organisation responsible for scheduling events (e.g.~an auctioneer) perhaps represent the most likely real-world application for this rescheduling approach. We therefore consider a more general version of the problem involving assigning times to classes of simultaneously occurring edges.  We mostly see that both versions of the problem are intractable, but we are able to identify a small number of special cases which admit polynomial-time exact or approximation algorithms.  Where possible, the restricted graph classes we consider in our pursuit of tractability are either inspired by our epidemiological motivation, or are a first step in the direction of such classes.

The rest of the paper is organised as follows.  We describe our model in more detail in Section \ref{sec:notation}, give formal definitions of the problems we consider in Section \ref{sec:probdefs}, and summarise our results in Section \ref{sec:summary}.  Section \ref{sec:singleton} is devoted to the setting in which edges can be reordered independently, and in Section \ref{sec:general} we consider the generalisation of the problem involving classes of simultaneously active edges.

\subsection{Our model, notation, and some simple observations on reachability}
\label{sec:notation}

A vertex $v$ is \emph{reachable} from vertex $u$ in a (di)graph if there is a (directed) path from $u$ to $v$ in that (di)graph.  The \emph{reachability set} of a vertex in a (di)graph is the set of all vertices reachable from that vertex.  A \emph{directed acyclic graph (DAG)} is a directed graph that does not contain any directed cycle.  An edge is \emph{incident} at the two vertices that it contains, and we say that two edges are \emph{incident} if they share a vertex. In any graph, we refer to a vertex of degree one as a \emph{leaf}, and any edge incident with a leaf as a \emph{leaf edge}.  For any natural number $n$, we write $[n]$ as shorthand for the set $\{1,\ldots,n\}$.  

In its most general form, a \emph{temporal} graph is a static (directed) graph $G = (V, E)$, which we refer to as the \emph{underlying graph}, together with a function $\mathcal{T}$ that maps each edge to a list of timesteps at which that edge is active.  Note that these timesteps need not give a continuous interval for any particular edge.  A \emph{strict temporal path} (sometimes called a \emph{time-respecting path} in the literature) from $u$ to $v$ in a temporal graph $(G=(V,E),\mathcal{T})$ is a (directed) path from $u$ to $v$ composed of edges $e_0,e_1...e_k$ such that each edge $e_i$ is assigned a time $t(e_i)$ from its image in $\mathcal{T}$ where $t(e_i)< t(e_{i+1})$ for $0 \leq i < k$.  We say a vertex $v$ is \emph{temporally reachable} from $u$ in a temporal graph $(G = (V, E),\mathcal{T})$, if there is a strict temporal path from $u$ to $v$; we adopt the convention that every vertex $v$ is temporally reachable from itself.  A vertex $v$ is temporally reachable from $u$ \emph{after time $\tau$} if there is a temporal path from $u$ to $v$ in which every edge is assigned a time strictly greater than $\tau$.  A \emph{weak temporal path}, and the concept of \emph{weak temporal reachability} are defined analogously, where we replace the requirement $t(e_i) < t(e_{i+1})$ with a weak inequality.  Both notions have been considered in the literature on temporal graphs (see, for example \cite{kempe02,mertzios13,temporalReview1,zschoche17}).  For simplicity, we use the notion of strict temporal reachability throughout this paper (and, unless otherwise stated, all temporal paths should be assumed to be strict); however, all of our results also hold in the setting of weak reachability, since there is always an optimal assignment of times to edges under weak reachability in which every edge (or edge class) is assigned a different time.  

We call the set of vertices that are temporally reachable from vertex $u$ in a temporal graph the \emph{temporal reachability set of $u$}.  Note that the temporal reachability set of any vertex $u$ in the temporal graph $(G,\mathcal{T})$ is a subset of its reachability set in the static underlying graph $G$.  The \emph{maximum temporal reachability} of a temporal graph is the maximum cardinality of the temporal reachability set of any vertex in the graph.  Note that the temporal reachability set of any vertex can be computed using a breadth-first search, and so the maximum reachability can be computed in polynomial time by considering each vertex in turn.  

As we are only interested in reachability in the sense defined above, we do not care about the absolute times at which edges are active, simply the order in which they are active.  Moreover, we will assume that it is determined in advance which subsets of edges will be active simultaneously (in the simpler version of the model, each such subset is a singleton), and the number of timesteps at which each such subset is active: this can be represented by a multiset $\mathcal{E}=\{E_1,\ldots,E_h\}$ of subsets of the edge-set of $G$, where each element of $\mathcal{E}$ is a subset of edges which is active simultaneously, and the multiplicity of each element in $\mathcal{E}$ is equal to the number of timesteps at which the subset is active.  This model is appropriate when each element of $\mathcal{E}$ corresponds to the connections that will be active when a particular event (e.g.~a livestock market, a conference, or a flight) takes place: we can potentially change the relative timing of the various events, but each event will cause a fixed subset of edges to be active.

To avoid the problem becoming trivial, we further restrict the way in which we may alter timing by requiring that each element of $\mathcal{E}$ is assigned a distinct timestep: without this restriction, we could always minimise the maximum reachability by assigning all edge subsets the same timestep.  We therefore assume throughout that every assignment of times to edges is a bijection (and we shall use the terms assignment and bijection interchangeably in this setting).  The timesteps at which each edge is active can thus be specified by the multiset $\mathcal{E}$ together with a bijection $t \colon \mathcal{E} \rightarrow [|\mathcal{E}|]$: the edge $e$ is active at time $s$ if and only if $t(E_i) = s$ for some $E_i \in \mathcal{E}$ with $e \in E_i$.  We will write $(G,\mathcal{E},t)$ for a temporal graph represented in this way, and $\reach_{G,\mathcal{E},t}(u)$ for the temporal reachability set of $u$ in $(G,\mathcal{E},t)$.

We now make some simple observations about reachability sets. 

\begin{obs}\label{prop:leaf}
Let $G$ be an undirected graph, $v$ a vertex of degree one in $G$, and $u$ the unique neighbour of $v$.  Then $\reach_{G,\mathcal{E},t}(v) \subseteq \reach_{G,\mathcal{E},t}(u)$.
\end{obs}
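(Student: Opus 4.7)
The plan is to pick an arbitrary $w \in \reach_{G,\mathcal{E},t}(v)$ and exhibit a strict temporal path from $u$ to $w$, thereby showing $w \in \reach_{G,\mathcal{E},t}(u)$. The whole argument rests on one structural observation: since $v$ has degree one, every strict temporal path emanating from $v$ must use the unique incident edge $\{u,v\}$ as its first step, so such a path necessarily passes through $u$ immediately after leaving $v$. Everything else is case-checking on the identity of $w$.

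First, I would handle the degenerate case $w = v$. Here I need $v \in \reach_{G,\mathcal{E},t}(u)$, which follows because $\{u,v\}$ is an edge of $G$ and therefore belongs to some $E_i \in \mathcal{E}$ (every edge is active at some timestep), so the single edge $\{u,v\}$ on its own is a strict temporal path from $u$ to $v$. Next, for $w \neq v$, I would take a strict temporal path $P$ from $v$ to $w$ with edges $e_0, e_1, \ldots, e_k$ assigned times $t_0 < t_1 < \cdots < t_k$. The structural observation forces $e_0 = \{u,v\}$, so deleting $e_0$ from $P$ yields the edge-sequence $e_1, \ldots, e_k$ at times $t_1 < \cdots < t_k$, which is again a strict temporal path, this time from $u$ to $w$; strictness is preserved because we are removing the smallest time from the chain. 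In the sub-case $w = u$ (so $k = 0$), the deletion leaves the empty edge-sequence and we invoke the stated convention that $u \in \reach_{G,\mathcal{E},t}(u)$.

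I do not anticipate a real obstacle: the proof is essentially a one-line pigeonhole on which edge can leave $v$. The only subtlety worth noting is the handling of the degenerate length-zero sub-path produced when $w \in \{u,v\}$, which is absorbed by the self-reachability convention, and an implicit reliance on the fact that each edge of the underlying graph appears in at least one $E_i$ (so that $\{u,v\}$ is active at some time); without this, the $w = v$ case would fail for trivial reasons.
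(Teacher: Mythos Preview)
Your argument is correct and is precisely the natural one; the paper states this as an observation without proof, and your reasoning (any temporal path out of a degree-one vertex must begin with its unique incident edge, hence factors through $u$) is exactly what the authors are relying on implicitly.
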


\begin{obs}\label{prop:first-edge}
Let $uv$ be an edge in the undirected graph $G$.  Suppose that $uv$ is first active at time $t_1$ and that no other edge incident with $v$ is active at or before time $t_1$.  Then $\reach_{G,\mathcal{E},t}(v) \subseteq \reach_{G,\mathcal{E},t}(u)$.
\end{obs}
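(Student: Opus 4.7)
The plan is to fix an arbitrary vertex $w \in \reach_{G,\mathcal{E},t}(v)$ and exhibit a strict temporal path from $u$ to $w$. The trivial cases $w = v$ and $w = u$ are easy: the single edge $uv$ at time $t_1$ certifies $v \in \reach_{G,\mathcal{E},t}(u)$, and $u \in \reach_{G,\mathcal{E},t}(u)$ by convention. So the substance lies in showing that an arbitrary strict temporal path from $v$ to some other $w$ can be converted into a strict temporal path from $u$ to $w$.

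The key observation is a dichotomy on the first edge. Let $P = v_0 v_1 \dots v_k$ be a strict temporal path from $v = v_0$ to $w = v_k$, with edge $v_{i-1}v_i$ assigned time $s_i$ and $s_1 < s_2 < \dots < s_k$. Since $v_0 v_1$ is incident with $v$, the hypothesis forces $s_1 \geq t_1$, with equality only if the edge $v_0 v_1$ is $uv$ itself. I would then split into two cases. In Case~1, $v_1 = u$: truncating $P$ by removing its first edge yields the subpath $u = v_1, v_2, \dots, v_k$ with edge times $s_2 < \dots < s_k$, which is a strict temporal path from $u$ to $w$. In Case~2, $v_1 \neq u$: then $v_0 v_1 \neq uv$, so by hypothesis $s_1 > t_1$; I would prepend the edge $uv$ (active at time $t_1$) to obtain the walk $u, v_0, v_1, \dots, v_k$ with strictly increasing edge times $t_1 < s_1 < s_2 < \dots < s_k$, a strict temporal path from $u$ to $w$. (Technically this is a walk rather than a simple path if $u$ appears later among the $v_i$, but strict temporal reachability is preserved by walks just as well as by paths, so this is harmless; alternatively, one can shortcut at the first recurrence of $u$.)

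I do not expect any real obstacle: the argument is a two-line case analysis, and the hypothesis on $t_1$ was designed precisely to give the ``slack'' needed to prepend $uv$ when the first step of $P$ does not already go to $u$.
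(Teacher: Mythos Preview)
The paper states this result as an observation without proof, so there is no argument to compare against; your case analysis is exactly the natural way to verify it and is correct. The only cosmetic point is the walk-versus-path issue you already flagged: since the paper defines temporal reachability via strict temporal paths, you should either note explicitly that a strict temporal walk implies the existence of a strict temporal path (take the prefix up to the last occurrence of $w$, or shortcut at repeated vertices), or simply observe that if $u$ appears as some $v_i$ with $i \geq 1$ in Case~2 then you are already in Case~1 after truncation.
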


For the next observation, we need one more piece of notation: we write $N_G(v)$ for the set of vertices in $G$ that are adjacent to $v$.

\begin{obs}\label{prop:degree}
For any undirected temporal graph $(G,\mathcal{E},t)$ and $v \in V(G)$, we have $\reach_{G,\mathcal{E},t}(v) \supseteq N_G(v) \cup \{v\}$.
\end{obs}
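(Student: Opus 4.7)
The plan is entirely direct: both parts of the inclusion follow from unpacking the definition of a strict temporal path, with no induction or case analysis needed. First, $v \in \reach_{G,\mathcal{E},t}(v)$ is immediate from the stated convention that every vertex is temporally reachable from itself. Second, for each neighbour $u \in N_G(v)$, I would exhibit an explicit strict temporal path from $v$ to $u$: since the edge $uv$ belongs to $G$, and each edge of $G$ is (implicitly) active at least once, there is some $E_i \in \mathcal{E}$ with $uv \in E_i$. Assigning $uv$ the time $t(E_i)$ yields a one-edge path from $v$ to $u$, and the condition $t(e_j) < t(e_{j+1})$ between consecutive edges is vacuous when the path contains only a single edge. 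Hence $u \in \reach_{G,\mathcal{E},t}(v)$. Combining these two facts gives $N_G(v) \cup \{v\} \subseteq \reach_{G,\mathcal{E},t}(v)$.

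There is no genuine obstacle here; the only subtlety worth flagging is the implicit assumption that every edge of the underlying graph $G$ appears in at least one $E_i \in \mathcal{E}$. This is natural in the model, since an edge that is never active is indistinguishable from an absent edge, but it is the assumption that turns a one-line vacuous argument into a valid proof. If one wished to be fully explicit, one could redefine $G$ at the outset to consist precisely of the edges appearing in some element of $\mathcal{E}$; then the observation holds without qualification.
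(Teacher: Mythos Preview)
Your argument is correct; the paper states this as a bare observation without proof, so there is no alternative approach to compare against. Your remark about the implicit assumption that every edge of $G$ appears in some $E_i \in \mathcal{E}$ is a fair point of rigour, and your suggested resolution (identifying $G$ with the edges actually appearing in $\mathcal{E}$) is the natural one.
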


In a directed graph $G$, we write $N_{G}^{out}(v)$ for the set of vertices $\{u: \overrightarrow{vu} \in E(G)\}$.  We can now make an analogous observation for the directed case.

\begin{obs}\label{prop:out-degree}
For any directed temporal graph $(G,\mathcal{E},t)$ and $v \in V(G)$, we have $\reach_{G,\mathcal{E},t}(v) \supseteq N_G^{out}(v) \cup \{v\}$.
\end{obs}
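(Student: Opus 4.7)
The plan is a direct verification from the definitions, split into the two obvious sub-claims corresponding to the two parts of the union on the right-hand side.

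First I would dispatch the $\{v\}$ part: the paper explicitly adopts the convention (in Section~\ref{sec:notation}) that every vertex is temporally reachable from itself, so $v \in \reach_{G,\mathcal{E},t}(v)$ with no further argument required.

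Next, for the $N_G^{out}(v)$ part, fix any $u \in N_G^{out}(v)$. By definition of $N_G^{out}(v)$, the directed edge $\overrightarrow{vu}$ lies in $E(G)$. Since $\mathcal{E}$ is a multiset of subsets of $E(G)$ specifying which edges are active (and the temporal graph in the statement is $(G,\mathcal{E},t)$ with the edge set realised by $\mathcal{E}$), the edge $\overrightarrow{vu}$ belongs to at least one $E_i \in \mathcal{E}$ and is therefore active at time $t(E_i)$. The single-edge sequence consisting of $\overrightarrow{vu}$ assigned time $t(E_i)$ is a strict temporal path from $v$ to $u$: the strict increasing-time condition $t(e_i) < t(e_{i+1})$ is vacuous for a path of length one, so there is nothing to check. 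Hence $u \in \reach_{G,\mathcal{E},t}(v)$, and taking the union over all out-neighbours and with $\{v\}$ gives the claimed inclusion.

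The only step worth flagging is the implicit assumption that every edge of $G$ actually appears in some $E_i$; if one wanted to be fully pedantic one could note that in the stated model the temporal graph is determined by $\mathcal{E}$, so any edge of $G$ not appearing in $\mathcal{E}$ would not really be an edge of the temporal graph. Beyond that, there is no genuine obstacle — the observation is essentially a restatement of the fact that a length-one directed path imposes no ordering constraint on time assignments.
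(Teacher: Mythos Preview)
Your argument is correct and is exactly the direct verification one would expect; the paper itself gives no proof of this observation, treating it as immediate from the definitions. Your flag about edges of $G$ needing to appear in some $E_i$ is a fair pedantic point, but as you note the model implicitly assumes this, so nothing is missing.
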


Finally, we define the \emph{edge-class interaction graph of $(G,\mathcal{E})$} (where $G$ is an undirected graph) to be the graph with vertex-set $[h]$ in which vertices $i$ and $j$ are adjacent if and only if there exist $e_i \in E_i$ and $e_j \in E_j$ such that $e_i$ and $e_j$ are incident.  

\begin{obs}\label{prop:independent}
Suppose that $i$ and $j$ are non-adjacent vertices in the edge-class interaction graph of $(G,\mathcal{E})$, and that the sets $E_i$ and $E_j$ are active at consecutive timesteps.  Then swapping the timesteps assigned to $E_i$ and $E_j$ does not change the reachability set of any vertex in $G$.
\end{obs}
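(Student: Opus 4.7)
The plan is to show that a walk of edges forms a strict temporal path from $u$ to $w$ under $t$ if and only if it forms one under the swapped assignment $t'$, which immediately implies the reachability set of every vertex is preserved. So fix any vertex $v$ and consider any temporal path $P=e_0,e_1,\ldots,e_k$ from $v$ under $t$; the argument will verify that the same sequence of edges is a valid temporal path under $t'$, and the reverse direction follows by symmetry. Without loss of generality assume $t(E_i)=\tau$ and $t(E_j)=\tau+1$, so that under $t'$ these values are swapped and every other edge-class retains its timestep.

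The first step is to note that since all edges in $E_i$ are active at the single time $\tau$ under $t$, and a temporal path has strictly increasing edge-times, $P$ can contain at most one edge of $E_i$; the same holds for $E_j$. I would then split into three cases depending on how $P$ meets $E_i\cup E_j$. If $P$ uses neither, nothing changes. If $P$ uses some $e^*\in E_i$ but no edge of $E_j$, then all earlier edges of $P$ have times $<\tau$ and all later edges have times $>\tau$ but $\neq\tau+1$ (since $E_j$ is not used), hence times $\geq\tau+2$; under $t'$ the edge $e^*$ moves to $\tau+1$, and the inequalities $<\tau<\tau+1<\tau+2\leq$(later times) show that $P$ is still strictly increasing. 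The case that $P$ uses only an edge of $E_j$ is symmetric.

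The remaining case is that $P$ uses both $e_i^*\in E_i$ at time $\tau$ and $e_j^*\in E_j$ at time $\tau+1$. Here the main (small) observation is that because $\tau$ and $\tau+1$ are consecutive and no other edge of $P$ can share a time with them, $e_i^*$ and $e_j^*$ must be consecutive in the path, so they share a vertex. But the non-adjacency of $i$ and $j$ in the edge-class interaction graph means no edge of $E_i$ is incident with any edge of $E_j$, a contradiction. So this case is empty, and the case analysis is complete.

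I do not anticipate any real obstacle: the only thing that requires care is keeping track of the strict inequalities between $\tau$ and $\tau+1$ and showing that neither edge of $P$ in $E_i\cup E_j$ can collide with the other's slot after the swap, which is handled cleanly by the ``at most one edge from each class'' remark. Once the case analysis above is written out, swapping $t$ and $t'$ gives the reverse inclusion, so $\reach_{G,\mathcal{E},t}(v)=\reach_{G,\mathcal{E},t'}(v)$ for every $v\in V(G)$.
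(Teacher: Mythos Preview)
Your argument is correct. The paper states this result as an observation without proof, so there is no ``paper's own proof'' to compare against; your case analysis is exactly the natural justification one would supply, and it goes through cleanly. The only minor remark is that your phrase ``$P$ uses some $e^*\in E_i$'' should be read as ``$P$ uses some edge at time $\tau$'' (which, since $t$ is a bijection, forces that edge to lie in $E_i$); this is what you implicitly do, and with that reading the three cases are exhaustive and the contradiction in the third case is exactly the content of non-adjacency in the edge-class interaction graph.
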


An analogous result holds for directed graphs if we define the edge-class interaction graph by making $i$ and $j$ adjacent if and only if there is some $v \in V(G)$ such that $v$ has an in-edge in $E_i$ and an out-edge in $E_j$, or vice versa.

\subsection{Problems considered}
\label{sec:probdefs}

The main problem we consider in this paper is the following.

\begin{framed}
\noindent
\textbf{\textsc{Min-Max Reachability Temporal Ordering}}\\
\textit{Input:} A graph $G = (V, E)$, a list $\mathcal{E}= \{E_1,\ldots,E_h\}$ of subsets of $E$, and a positive integer $k$.\\
\textit{Question:} Is there a bijective function $t \colon \mathcal{E} \rightarrow [h]$ such that the maximum temporal reachability of $(G,\mathcal{E},t)$ is at most $k$?
\end{framed}

The simplest case is when $\mathcal{E}$ consists of pairwise disjoint singleton sets (so that all edges can be reordered independently; this is analogous to the notion of single-edge single-label temporal graphs in \cite{akrida17}).  We refer to this special case as \textsc{Singleton Min-Max Reachability Temporal Ordering}.  In this singleton setting, we will sometimes abuse notation and consider $\mathcal{E}$ to be equal to the edge set $E$ of $G$ (rather than the set of all singleton sets containing one edge of $E$), and describe a temporal graph as $(G,E,t)$ where $E$ is the edge-set of $G$.  In this setting we may assume without loss of generality that $\mathcal{E}$ is a set: if an edge must be active at more than one timestep, we can always minimise reachability by assigning it consecutive timesteps, in which case the reachabilities are the same as if each edge is active at a single timestep.

Observe that changing the order in which edges are active can have a huge impact on the reachability of vertices in the graph.  For example, let $G$ consist of a clique on $r$ vertices, where each vertex in the clique has $s$ pendant leaves.  If all edges in the clique are active (in an arbitrary order) before all edges incident with the leaves, then each vertex in the clique will reach the entire graph, a total of $r(s+1)$ vertices.  On the other hand, if edges incident with leaves are active first (in some order), then no vertex reaches more than $r+s$ vertices (including itself).

We now argue that our problems belong to $\NP$.

\begin{prop}\label{prop:compute}
We can compute the maximum temporal reachability of a temporal graph $(G,\mathcal{E},t)$ in polynomial time.
\end{prop}
\begin{proof}
For any vertex $v \in G$, we can compute $reach_{G,\mathcal{E},t}(v)$ in polynomial time as in \cite{kempe02,mertzios13}.  Therefore, considering each vertex in turn, we can find the maximum temporal reachability of the whole temporal graph in polynomial time.

\end{proof}

\begin{cor}\label{cor:inNP}
\tempOrd\, belongs to $\NP$.
\end{cor}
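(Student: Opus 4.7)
The plan is to verify membership in $\NP$ via the standard route of exhibiting a succinct certificate together with a polynomial-time verifier. The natural witness is a candidate bijection $t \colon \mathcal{E} \to [h]$ itself: since $t$ is a permutation of $[h]$, it can be encoded as a list of $h$ integers from $[h]$, using $O(h \log h)$ bits, which is polynomial in the size of the input $(G,\mathcal{E},k)$.

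Given a purported witness $t$, the verifier must check two things: first, that $t$ really is a bijection $\mathcal{E} \to [h]$ (trivial: scan the list and confirm every value in $[h]$ appears exactly once), and second, that the maximum temporal reachability of $(G,\mathcal{E},t)$ is at most $k$. The second check is exactly what Proposition~\ref{prop:compute} provides: invoke the $\mathcal{O}\left(n(n+m)\right)$-time algorithm and compare the output against $k$, accepting iff it is at most $k$.

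Because Proposition~\ref{prop:compute} already does all the real work, there is no genuine obstacle to overcome here; the corollary is essentially a bookkeeping statement. The only thing one might briefly remark on is that the input size already contains $\mathcal{E}$ (and thus implicitly $h$), so both the witness length and the verifier running time are polynomial in the actual input length, which is all that is required to conclude \tempOrd\ $\in \NP$.
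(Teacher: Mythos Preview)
Your proposal is correct and matches the paper's intended argument: the corollary is stated without proof as an immediate consequence of Proposition~\ref{prop:compute}, and your write-up supplies exactly the obvious certificate-and-verifier details that the paper leaves implicit.
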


We conclude this section with two simple observations about situations in which both problems admit efficient algorithms; here $\Delta(G)$ denotes the maximum degree of the graph $G$.

\begin{prop}\label{prop:trivial-kdeg}
\tempOrd\, is solvable in polynomial-time when $k \leq \Delta(G)$.
\end{prop}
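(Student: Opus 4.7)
The plan is to argue that when $k \leq \Delta(G)$, the answer to \tempOrd\ is always NO, so the algorithm simply outputs NO (and can verify this trivially by computing $\Delta(G)$ in polynomial time).

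The key step is to invoke Observation \ref{prop:degree}, which tells us that for any bijection $t$ and any vertex $v \in V(G)$, we have $\reach_{G,\mathcal{E},t}(v) \supseteq N_G(v) \cup \{v\}$, so $|\reach_{G,\mathcal{E},t}(v)| \geq \deg_G(v) + 1$. Taking $v^*$ to be a vertex of maximum degree in $G$ gives $|\reach_{G,\mathcal{E},t}(v^*)| \geq \Delta(G) + 1 \geq k + 1 > k$, no matter how $t$ is chosen. Hence no ordering can yield maximum temporal reachability at most $k$, and the instance is a NO-instance.

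The algorithm therefore consists of computing $\Delta(G)$ (which is trivially polynomial-time in $n$ and $m$), verifying that $k \leq \Delta(G)$, and returning NO. There is essentially no obstacle here: the proposition is a direct corollary of Observation \ref{prop:degree}, and the only thing to note is that this argument relies on the undirected setting, so if the directed variant were intended one would instead appeal to Observation \ref{prop:out-degree} with $\Delta(G)$ interpreted as the maximum out-degree.
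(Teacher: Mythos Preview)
Your proposal is correct and follows essentially the same argument as the paper: both invoke Observation~\ref{prop:degree} to conclude that the maximum temporal reachability is at least $\Delta(G)+1 > k$ for any ordering, so the instance is trivially a no-instance. Your write-up is slightly more explicit (naming a maximum-degree vertex and spelling out the algorithm), but the idea is identical.
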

\begin{proof}
By Observation \ref{prop:degree}, we know that the maximum reachability set of $(G,\mathcal{E},t)$ is at least $\Delta(G) + 1 > k$, so we must have a no-instance.
\end{proof}

\begin{prop}
\tempOrd\, belongs to $\FPT$ when parameterised by the number $h$ of timesteps.
\end{prop}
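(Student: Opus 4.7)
The plan is to observe that the parameter $h$ bounds the total number of candidate solutions very tightly: since $t$ is required to be a bijection from $\mathcal{E}$ to $[h]$, there are exactly $h!$ possible orderings to consider. The idea is simply to enumerate all of them and, for each, check whether the resulting temporal graph has maximum temporal reachability at most $k$.

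First, I would iterate over all $h!$ bijections $t \colon \mathcal{E} \to [h]$. For each bijection, I would invoke Proposition \ref{prop:compute} to compute the maximum temporal reachability of $(G,\mathcal{E},t)$ in time $\mathcal{O}(n(n+m))$. If any bijection produces a temporal graph with maximum temporal reachability at most $k$, we return \true; otherwise, after exhausting all orderings, we return \false. The total running time is $\mathcal{O}(h! \cdot n(n+m))$, which is of the form $f(h) \cdot \mathrm{poly}(n,m)$ with $f(h) = h!$, so the algorithm is fixed-parameter tractable in $h$.

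There is no real obstacle here: correctness is immediate because the enumeration is exhaustive over the entire space of valid assignments, and the per-ordering check is handled in polynomial time by Proposition \ref{prop:compute}. The only minor observation worth stating explicitly is that $h$ upper-bounds the complexity of the scheduling decisions entirely, so no further structural analysis of $G$ is needed.
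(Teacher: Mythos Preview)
Your proposal is correct and matches the paper's proof essentially line for line: enumerate all $h!$ bijections, invoke Proposition~\ref{prop:compute} for each to check the maximum reachability in time $\mathcal{O}(n(n+m))$, and conclude the total running time is $\mathcal{O}(h!\cdot n(n+m))$.
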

\begin{proof}
Given any instance of \tempOrd, there are at most $h!$ possible orderings of the edge subsets, as each must be a bijective function from $\{E_1,\ldots,E_h\}$ to $h$.  Having fixed such an ordering we can, by Proposition \ref{prop:compute}, find the maximum reachability of the corresponding temporal graph in polynomial time.  It follows that we can solve \tempOrd\, in polynomial time by considering each of the $h!$ orderings in turn.   
\end{proof}

\subsection{Summary of results}
\label{sec:summary}

We begin by considering the special case in which $\mathcal{E}$ consists of pairwise disjoint, singleton edge sets.  We show that \singTempOrd\, is $\NP$-complete on general graphs, but that the optimisation version admits a linear-time constant-factor approximation algorithm when restricted to graphs of bounded maximum degree.  We also consider the problem on some special classes of graphs, and show that \singTempOrd\, is polynomial-time solvable on DAGs and, when restricted to trees, admits an FPT algorithm parameterised by either the maximum permitted reachability $k$ or the maximum degree; we also give an improved approximation algorithm for trees.

In the more general case of \tempOrd, we show that the problem remains $\NP$-complete even on trees and DAGs, and moreover is $\W[1]$-hard on trees when parameterised by the vertex cover number.  In this setting we can obtain a constant-factor approximation to the optimisation problem on graphs of bounded maximum degree provided that the edge-class interaction graph is bipartite or has bounded degree.

\section{The case of singleton edge classes}
\label{sec:singleton}

In this section we consider the restricted version of the problem in which we can reschedule each edge independently.  In Section \ref{sec:singleton-hard} we show that even this special case of the problem is $\NP$-complete on general graphs, before deriving some general bounds on the achievable maximum reachability which lead to an approximation algorithm for bounded degree graphs in Section \ref{sec:singleton-approx}.  We then give efficient exact algorithms for some special cases in Section \ref{sec:singleton-special}.

Recall that we abuse notation in the singleton case, frequently referring to a temporal graph as $(G,E,t)$ where $E$ is the edge-set, rather than $\left(G, \mathcal{E} = \left\lbrace \{e\} \right\rbrace, t\right)$, and considering the domain of $t$ to be $E$ rather than $\mathcal{E}$ (so $t$ is an assignment of times to edges rather than to singleton sets of edges).

Several results in this section will make use of the following simple lemma.

\begin{lma}\label{lma:leaves-first}
Let $(G,E,k)$ be a yes-instance of \singTempOrd.  Then there is an assignment $t: E \rightarrow [|E|]$ of times to edges such that the maximum reachability of $(G,E,t)$ is at most $k$ and all leaf edges are assigned times strictly before all other edges.  Moreover, in such an assignment, the relative order of the leaf edges does not change the maximum reachability.
\end{lma}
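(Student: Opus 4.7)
The plan is to begin with an assignment $t$ achieving maximum reachability at most $k$, and construct $t^*$ by placing all $\ell$ leaf edges at times $1,\ldots,\ell$ in an arbitrary relative order, and placing the remaining $|E|-\ell$ non-leaf edges at times $\ell+1,\ldots,|E|$ preserving their relative order from $t$. The main task is to show that the maximum reachability of $(G,E,t^*)$ is also at most $k$; the moreover clause will then fall out of the explicit description of reach sets that emerges along the way.

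The key structural observation I would establish first is that, under $t^*$, any strict temporal path decomposes as a (possibly empty) prefix of leaf edges followed by a (possibly empty) suffix of non-leaf edges, because every leaf edge is strictly earlier than every non-leaf edge. Combined with the fact that a leaf has degree one, this forces any leaf edge in a temporal path to be incident with an endpoint of the entire path.

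Next I would bound $|\reach_{G,E,t^*}(v)|$ by a case analysis on whether $v$ is a leaf. If $v$ is not a leaf, any temporal path from $v$ in $t^*$ either consists of a single leaf edge to a leaf neighbour of $v$, or begins with a non-leaf edge and hence consists entirely of non-leaf edges. In both cases the reached vertex lies in $\reach_{G,E,t}(v)$: the leaf neighbours of $v$ are reachable from $v$ in $t$ simply by traversing the incident leaf edge, and since the relative order of non-leaf edges is preserved, any non-leaf-only path is also temporal in $t$. Hence $|\reach_{G,E,t^*}(v)|\le k$. If instead $v$ is a leaf with unique neighbour $u$, every temporal path from $v$ begins $v\to u$ and continues from $u$ at times greater than $t^*(vu)$; an analogous argument shows that the continuation reaches only vertices in $\reach_{G,E,t}(u)$, giving $|\reach_{G,E,t^*}(v)|\le |\reach_{G,E,t}(u)|\le k$.

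For the moreover statement, the same case analysis yields explicit formulae. For any non-leaf $v$, the set $\reach_{G,E,t^*}(v)$ equals $\{v\}$ together with the leaf neighbours of $v$ together with the vertices reachable from $v$ using only non-leaf edges, which is manifestly independent of the leaf-edge ordering. For the leaves of a fixed non-leaf $u$, the individual reach sets do depend on the order, but across all such leaves their sizes range over the same multiset of values regardless of the ordering; in particular the maximum is always $r_u+R_u$, where $r_u$ is the number of leaves adjacent to $u$ and $R_u$ is the size of the set reachable from $u$ using only non-leaf edges. The main subtlety I anticipate is ensuring that the case analysis on path structure is genuinely exhaustive — in particular ruling out any interleaving of leaf and non-leaf edges mid-path — but this should follow cleanly from the structural observation above.
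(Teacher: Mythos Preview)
Your argument is correct. The route differs from the paper's in two respects. For the main claim, the paper moves leaf edges to the front one at a time, showing after each single move that no reachability set grows beyond $k$, and iterates; you instead construct $t^*$ in one shot and analyse it directly via the path-structure decomposition (leaf-edge prefix followed by non-leaf suffix). For the ``moreover'' clause, the paper uses an adjacent-transposition argument---swapping two consecutive leaf edges either has no effect (Observation~\ref{prop:independent}, when they are not incident) or yields an isomorphic temporal graph (when they are incident)---whereas you compute the reach sets explicitly and observe that the maximum, $r_u+R_u$ over all non-leaf $u$, depends only on the non-leaf ordering. Your approach gives a concrete description of every reach set under $t^*$, which is handy for downstream calculations; the paper's approach is a little shorter and avoids enumerating path structures.
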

\begin{proof}
Suppose that $t: E \rightarrow [|E|]$ is an assignment of times to edges such that the maximum reachability of $(G,\mathcal{E},t)$ is at most $k$, and suppose that $v$ is a leaf with neighbour $u$.  We claim that, if $t'$ is the assignment of times to edges obtained from $t$ by moving $uv$ to the start and leaving the relative order of edges otherwise unchanged, then no vertex in $G$ has reachability set of size more than $k$ in $(G,E,t')$.

To see this, first note that, for any vertex $w \notin \{u,v\}$, we have that $v \notin \reach_{G,E,t'}(w)$ and $\reach_{G,E,t}(w) \setminus \{v\} = \reach_{G,E,t'}(w)$.
Thus the reachability set of $w$ can only decrease in size, so by assumption $|\reach_{G,E,t'}(w)| \leq k$.  Next, observe that the reachability set of $u$ is unchanged as it includes $v$ under any ordering, and so has size at most $k$. Finally we observe that, although the reachability set of $v$ could increase in size, we must (by Observation \ref{prop:leaf}) have $\reach_{(G,E,t')}(v) \subseteq \reach_{(G,E,t')}(u)$ and so, by the reasoning above, $|\reach_{(G,E,t')}(v)| \leq k$.

Finally, to see that the relative ordering of leaf edges does not change the maximum reachability, we consider two cases.  First, if the leaf edges $e_1$ and $e_2$ are not incident, we know by Observation \ref{prop:independent} that swapping the times assigned to $e_1$ and $e_2$ has no effect on the reachability of any vertex.  Secondly, if $e_1$ and $e_2$ are incident, then swapping their assigned times results in an isomorphic temporal graph (i.e.~the vertices can be relabelled to give an identical temporal graph) and so in particular the maximum reachability does not change.
\end{proof}

\subsection{$\NP$-completeness}\label{sec:singleton-hard}

In this section we prove that \singTempOrd\, is $\NP$-complete.  

\begin{thm}\label{thm:sing-hard-new}
\singTempOrd\, is $\NP$-complete even when the maximum degree $\Delta(G)$ of the input graph and the maximum permitted reachability $k$ are bounded by constants.
\end{thm}
By Corollary \ref{cor:inNP}, (\textsc{Singleton}) \tempOrd\, is in $\NP$, so it remains to show that this version of the problem is $\NP$-hard. 
We prove this by means of a reduction from the following problem, shown to be $\NP$-complete in \cite{berman04}.
\begin{framed}
\noindent
\textbf{\textsc{$(3,2B)$-SAT}}\\
\textit{Input:} A CNF formula $\Phi$ in which every clause contains three literals, and every variable appears positively in exactly two clauses and negatively in exactly two clauses.\\
\textit{Question:} Is $\Phi$ satisfiable?
\end{framed}
Suppose our instance of (3,2B)-\textsc{SAT} is the formula $\Phi = C_1 \wedge \cdots \wedge C_m$, with variables $x_1,\ldots,x_n$; we will assume without loss of generality that the literals appearing in each clause are all distinct.  We will construct an instance $(G,E,k)$ of \singTempOrd\, which is a yes-instance if and only if $\Phi$ has a satisfying assignment.

We begin by describing the construction of the graph $G$.  For each clause $C_j$ in $\Phi$ we have a vertex $v_j$, and for each variable $x_i$ we have a three-vertex path on vertices $t_i$, $w_i$ and $f_i$ (with edges $t_iw_i$ and $w_if_i$).  For each occurrence of a positive literal $x_i$ in a clause $C_j$ we add a new \emph{subdividing} vertex adjacent to $v_j$ and $t_i$, and for each negative literal $\neg x_{\ell}$ in the $C_j$ we add a new subdividing vertex adjacent to $v_j$ and $f_{\ell}$; note that, as every variable appears twice positively and twice negatively in $\Phi$, every vertex $t_i$ or $f_i$ is adjacent to exactly two of these subdividing vertices.  Additionally, for each $i$, we add $21$ leaves adjacent to $w_i$ and $19$ leaves adjacent to each of $t_i$ and $f_i$.  We complete the construction by adding three additional neighbours $a_{i,1}$, $a_{i,2}$ and $a_{i,3}$ to each $t_i$, and symmetrically adding three additional neighbours $b_{i,1}$, $b_{i,2}$ and $b_{i,3}$ to $f_i$; each vertex $a_{i,\ell}$ and $b_{i,\ell}$ itself has $(29-\ell)$ additional leaf neighbours.  The construction of $G$ is illustrated in Figure \ref{fig:bdd_deg_hardness}.  Our instance of \singTempOrd\, is $(G,E(G),30)$.

\begin{figure}
\centering
\includegraphics[width = 0.75 \linewidth]{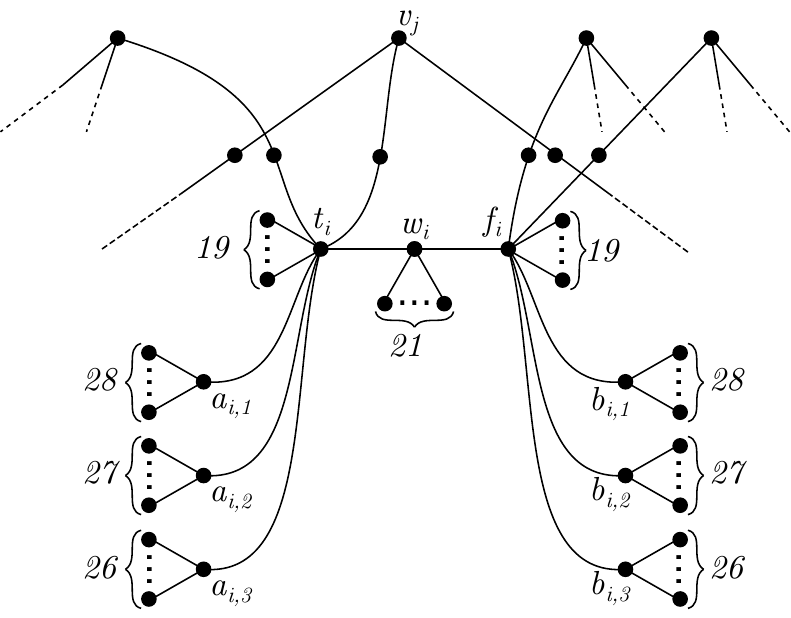}
\caption{Part of the construction of the graph $G$; in this example variable $x_i$ appears positively in clause $C_j$.}
\label{fig:bdd_deg_hardness}
\end{figure}

\begin{lma}\label{lma:truth->order}
Suppose that $\Phi$ has a satisfying truth assignment $b: \{x_1,\ldots,x_n\} \rightarrow \{\true,\false\}$.  There exists an ordering of the edges of $G$ under which no reachability set has cardinality greater than $30$.
\end{lma}
\begin{proof}
We describe an ordering of the edges of $G$ derived from the truth assignment $b$; this order consists of the following edges, in order:
\begin{enumerate}
\item all edges incident with a leaf, in an arbitrary order;
\item all edges incident with any clause-vertex $v_j$, in an arbitrary order;
\item all remaining edges on any two-edge path from a clause-vertex to a variable-vertex $t_i$ or $f_i$, in an arbitrary order;
\item the edges $t_iw_i$ for every $i$ such that $b(x_i) = \false$, in arbitrary order;
\item the edges $f_iw_i$ for every $i$ such that $b(x_i) = \true$, in arbitrary order;
\item the edges $t_iw_i$ for every $i$ such that $b(x_i) = \true$, in arbitrary order;
\item the edges $f_iw_i$ for every $i$ such that $b(x_i) = \false$, in arbitrary order;
\item for each $i$, the edges $a_{i,3}, a_{i,2}, a_{i,1}, b_{i,3}, b_{i,2}, b_{i,1}$, in that order.
\end{enumerate}
Notice that $t_iw_i$ is active before $f_iw_i$ if and only if $b(x_i) = \false$.

First note that, by Observation \ref{prop:leaf}, the reachability set of any leaf vertex is a subset of the reachability set of its unique neighbour, therefore we do not need to bound the cardinality of the reachability set of any leaf.  Moreover, under the ordering described above, the middle vertex on any of the two-edge paths from a clause-vertex $v_j$ to a vertex $t_i$ or $f_i$ reaches a subset of the vertices reached by $v_j$ itself, so we do not need to consider these vertices.  It therefore suffices to show that none of $v_j$, $t_i$, $f_i$, $a_{i,\ell}$ or $b_{i,\ell}$ (for any value of $j$, $i$, or $\ell$) reaches more than $30$ vertices.

We first consider a vertex $a_{i,\ell}$ (the case for $b_{i,\ell}$ follows by the same argument).  We claim that the reachability set of $a_{i,\ell}$ contains precisely its neighbours together with the (possibly empty) set of vertices $\{a_{i,\ell'}: \ell' < \ell\}$: to see this, observe that every edge incident with any of $\{a_{i,\ell}: 1 \leq \ell \leq 3\}$ or $t_i$ is active strictly before the edge between a vertex $a_{i,\ell}$ (for any $\ell$) and $t_i$.  It therefore follows that the reachability set of $a_{i,\ell}$ has cardinality $1 + (29 - \ell) + 1 + (\ell - 1) = 30$, as required.

Now consider a vertex $w_i$.  Under the specified ordering, it is easy to verify that $w_i$ reaches precisely itself and its pendant leaves, together with $t_i$, $a_{i,1}$, $a_{i,2}$, $a_{i,3}$, $f_i$, $b_{i,1}$, $b_{i,2}$ and $b_{i,3}$.  This set has cardinality $1 + 21 + 8 = 30$, as required.

Next consider a vertex $t_i$ (again, the case for $f_i$ will follow by a symmetric argument).  Under the specified ordering, $t_i$ reaches itself, its own neighbours (consisting of $19$ pendant leaves, $a_{i,1}$, $a_{i,2}$ and $a_{i,3}$, its two neighbours on clause-vertex paths, and $w_i$); if $b(x_i) = \false$ then additionally it reaches $f_i$, $b_{i,1}$, $b_{i,2}$ and $b_{i,3}$.  Thus its reachability set has cardinality either $26$ or $30$, depending on the truth value of $x_i$, and in either case we meet the requirement.

To complete the proof of the lemma, it remains only to demonstrate that the reachability set of each clause vertex $v_j$ is not too large.  Suppose that the corresponding clause $C_j$ is $(\ell_1 \vee \ell_2 \vee \ell_3)$, where $\ell_r \in \{x_{i_r}, \neg x_{i_r}\}$; we write $u_r$ for the vertex corresponding to $ell_r$ (so $u_r = t_{i_r}$ if $\ell_r = x_{i_r}$, and $u_r = f_{i_r}$ if $\ell_r = \neg x_{i_r}$).  It is clear that, under the given ordering, $v_j$ reaches itself and every vertex at distance at most two from $v_j$; this set consists of seven vertices, including one $u_1$, $u_2$ and $u_3$.  We further observe that, for each $r \in \{1,2,3\}$, $v_j$ reaches $w_{i_r}$ and either the set $\{a_{i_r,1},a_{i_r,2},a_{i_r,3}\}$ or $\{b_{i_r,1},b_{i_r,2},b_{i_r,3}\}$, as well as possibly one vertex on the two-edge path from $u_r$ to another clause-vertex.  Moreover, if $b(\ell_i) = \false$ then $v_j$ will in fact reach \emph{both} of $t_{i_r}$ and $f_{i_r}$, as well as all of $\{a_{i_r,1},a_{i_r,2},a_{i_r,3}\} \cup \{b_{i_r,1},b_{i_r,2},b_{i_r,3}\}$.  Thus the total number of vertices reached by $v_j$ is at most
$$7 + (3 \times 5) + 4\left|\left\lbrace i \in \{1,2,3\}: b(\ell_i) = \false\right\rbrace\right|.$$
Since, by assumption, $b$ is a satisfying assignment for $\Phi$, we know that at most two of the literals in $C_j$ evaluate to FALSE under $b$, so we conclude that the reachability set of $v_j$ has cardinality at most $30$.
\end{proof}

\begin{lma}\label{lma:order->truth}
Suppose that there is an assignment $t$ of times to edges such that the maximum temporal reachability of $(G,E,t)$ is at most $30$. Then $\Phi$ has a satisfying assignment.
\end{lma}
\begin{proof}
We define a truth assignment $b: \{x_1,\ldots,x_n\} \rightarrow \{\true,\false\}$ as follows:
\begin{equation*}
b(x_i) = \begin{cases}
				\true	& \text{if $t(f_iw_i) < t(t_iw_i)$} \\
				\false  & \text{otherwise.}
		 \end{cases}
\end{equation*}
We will suppose, for a contradiction, that there is some clause $C_j = (\ell_1 \vee \ell_2 \vee \ell_3)$ in $\Phi$ that is not satisfied under this assignment.  Recall from Lemma \ref{prop:leaf} that we may assume without loss of generality that all edges to leaves occur strictly before all others in the ordering $t$.  We begin by making some further deductions about the relative ordering of edges under $t$.

\begin{claim}\label{claim:a_edges}
For each $i$, all edges in the set $\{a_{i,\ell}t_i: 1 \leq \ell \leq 3\}$ occur strictly after any other edge incident with $t_i$.
\end{claim}
\begin{proof}[Proof of Claim \ref{claim:a_edges}]
Notice first that $a_{i,1}$ has degree $29$; since every vertex necessarily reaches itself and its neighbours, we can deduce from the fact that its reachability set has cardinality at most $30$ that $a_{i,1}$ does not reach any vertex outside this set.  In order for this to happen, it must be that every edge incident with $t_i$ is active strictly after $a_{i,1}t_i$ itself.  It therefore follows that $a_{i,2}$ and $a_{i,3}$ both reach $a_{i,1}$.  Now considering $a_{i,2}$, we observe that $a_{i,2}$ necessarily reaches $30$ vertices (itself, its neighbours and $a_{i,1}$) so, to avoid reaching more than this, we must have that every edge incident with $t_i$ other than $a_{i,1}t_i$ is active strictly after $a_{i,2}t_i$, and in particular $a_{i,3}$ must reach $a_{i,2}$.  Applying the same reasoning to $a_{i,3}$ completes the proof of this claim.
\renewcommand{\qedsymbol}{$\square$ (Claim \ref{claim:a_edges})}
\end{proof}

\begin{claim}\label{claim:b_edges}
For each $i$, all edges in the set $\{b_{i,\ell}f_i: 1 \leq \ell \leq 3\}$ occur strictly after any other edge incident with any edge in the set.
\end{claim}
\begin{proof}[Proof of Claim \ref{claim:b_edges}]
This follows by the same reasoning as Claim \ref{claim:a_edges}.
\renewcommand{\qedsymbol}{$\square$ (Claim \ref{claim:b_edges})}
\end{proof}

\begin{claim}\label{claim:truth_edges}
For each $i$, both $t_iw_i$ and $f_iw_i$ occur strictly after any incident edge, except $\{a_{i,\ell}t_i: 1 \leq \ell \leq 3\}$ or $\{b_{i,\ell}f_i: 1 \leq \ell \leq 3\}$.
\end{claim}
\begin{proof}[Proof of Claim \ref{claim:truth_edges}]
Consider the reachability set of $w_i$.  This certainly contains $w_i$ itself and the $21$ leaf neighbours of $w_i$ as well as $t_i$, $f_i$ and (by Claims \ref{claim:a_edges} and \ref{claim:b_edges}) $a_{i,1},a_{i,2},a_{i,3}$ and $b_{i,1},b_{i,2},b_{i,3}$, a total of $30$ vertices.  It follows that $w_i$ cannot reach any vertices other than these, and hence the remaining edges incident with $t_i$ must be active strictly before $w_it_i$, and similarly for $f_i$.
\renewcommand{\qedsymbol}{$\square$ (Claim \ref{claim:truth_edges})}
\end{proof}

We now return our attention to the unsatisfied clause $C_j$, and consider the vertex corresponding to an arbitrary literal $\ell_r$ in the clause; suppose that $\ell_r \in \{x_i, \neg x_i\}$, and let $u_r$ be the vertex corresponding to $\ell_r$ (either $t_i$ or $f_i$).  Observe that $u_r$ certainly reaches itself and all of its $25$ neighbours.  Moreover, by definition of $b$, since $\ell_r$ is false, it follows that $u_r$ also reaches the unique vertex $u_r'$ in $\{t_i,f_i\} \setminus u_r$ and hence, by the observations above, also reaches the three neighbours of $u_r'$ in the set $\{a_{i,1},a_{i,2},a_{i,3},b_{i,1},b_{i,2},b_{i,3}\}$.  The reachability set of $u_r$ therefore has cardinality at least $30$; to avoid exceeding this, we conclude that $u_r$ does not reach $v_j$, implying that the edge incident with $v_j$ on the two-edge path from $v_j$ to $u_r$ is the first active edge on this path.  

It therefore follows that $v_j$ reaches $u_r$.  Moreover, by Claims \ref{claim:a_edges}, \ref{claim:b_edges} and \ref{claim:truth_edges}, we know that $v_j$ reaches $u_r$ strictly before any of the edges in the set $\{t_iw_i, f_iw_i, t_ia_{i,1}, t_ia_{i,2}, t_ia_{1,3}, f_ib_{i,1}, f_ib_{i,2}, f_ib_{i,3}\}$ is active.  Since $u_r$ reaches every vertex in the set $\{t_i,w_i,f_i,a_{i,1},a_{i,2},a_{i,3},b_{i,1},b_{i,2},b_{i,3}\}$ using only this set of edges, it follows that $v_j$ also reaches these vertices.  Thus we see that $v_j$ reaches at least $9$ vertices via temporal paths that include $u_r$. Since this holds for every literal $\ell_r$ in $C_j$, and $v_j$ also reaches itself and its three neighbours, we conclude that $v_j$ has a reachability set of cardinality at least
$$1 + 3 + (3 \times 9) = 31,$$
giving the required contradiction to the assumption that no vertex has a reachability set of cardinality greater than $30$ under the assignment $t$.
\end{proof}

\begin{proof}[Proof of Theorem \ref{thm:sing-hard-new}]
The result follows immediately from Lemmas \ref{lma:truth->order} and \ref{lma:order->truth}.
\end{proof}

The reduction used to prove Theorem \ref{thm:sing-hard-new} immediately implies a lower bound on the best approximation ratio we can hope to achieve in polynomial time for the optimisation version of the problem, even when the input graph has bounded degree.  Note that, in the reduction, we can achieve maximum temporal reachability at most $30$ if the $(3,2B)$-\textsc{SAT} formula is satisfiable, and otherwise the maximum reachability with respect to any edge ordering must be at least $31$.  Therefore, any algorithm that allows us to compute a $c$-approximation to the minimum possible cardinality of the largest reachability set, where $c < 31/30$, would determine whether the $(3,2B)$-\textsc{SAT} formula is satisfiable (it is satisfiable if and only if the output of the approximation algorithm is strictly less than $31$).

\begin{cor}\label{cor:approx-hard}
Unless $\P=\NP$, there is no polynomial-time approximation algorithm for the optimisation version of \singTempOrd\, with performance ratio better than $31/30$.
\end{cor}

\subsection{General bounds and an approximation algorithm}\label{sec:singleton-approx}

In this section we show that, in the singleton case, it is always possible to find an ordering of the edges so that the maximum reachability is bounded by a function of the edge chromatic number of the input graph, and that this bound is in fact tight for certain graphs.  As a consequence, we obtain a constant-factor approximation algorithm for bounded degree graphs.  We begin with a general result showing that we can bound the maximum reachability by considering the maximum reachability of subgraphs that partition the edge-set.

\begin{lma}\label{lma:reach-decomp}
Let $G = (V,E)$ be a graph and let $\mathcal{P} = \{E_1,\ldots,E_s\}$ be a partition of $E$.  Let $G_i = (V, E_i)$ be the subgraph with edge set $E_i$ for each $1 \leq i \leq s$, and suppose that for each $i$ there is an assignment $t_i$ of times $\{1,\ldots,|E_i|\}$ to elements of $E_i$ such that the maximum reachability of $(G_i,E_i,t_i)$ is at most $r_i$.  Then there is an assignment $t$ of times $\{1,\ldots,|E|\}$ to the edges in $E$ such that the maximum reachability of $(G,E,t)$ is at most $\prod_{i=1}^s r_i$.
\end{lma}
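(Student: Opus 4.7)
The plan is to construct $t$ by concatenating the orderings $t_1, t_2, \ldots, t_s$ in order. Specifically, I would let $T_i = \sum_{j < i} |E_j|$ and define $t(e) = T_i + t_i(e)$ whenever $e \in E_i$. This produces a bijection from $E$ to $[|E|]$ with the property that every edge in $E_i$ is active strictly before every edge in $E_{i+1}$, while within each $E_i$ the relative order matches $t_i$.

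The key structural observation I would then establish is that any strict temporal path in $(G,E,t)$ decomposes cleanly according to the partition $\mathcal P$. Because all edges of $E_i$ precede all edges of $E_{i+1}$, the edges of any such path, read in order, use edges from $E_{j_1}, E_{j_2}, \ldots$ with weakly increasing block indices, and consecutive edges within the same block $E_i$ form a strict temporal path in $(G_i, E_i, t_i)$. Thus if $v = w_0, w_1, \ldots, w_\ell = u$ is the sequence of vertices on a strict temporal path from $v$ to $u$, there are vertices $v = x_0, x_1, \ldots, x_s = u$ (with $x_i = x_{i-1}$ whenever the $i$-th block is empty) such that $x_i \in \reach_{G_i,E_i,t_i}(x_{i-1})$.

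With this observation in hand, I would define inductively $A_0 = \{v\}$ and $A_i = \bigcup_{w \in A_{i-1}} \reach_{G_i,E_i,t_i}(w)$. The decomposition argument above yields $\reach_{G,E,t}(v) \subseteq A_s$. Since $|\reach_{G_i,E_i,t_i}(w)| \le r_i$ for every $w$ by hypothesis, a routine induction gives $|A_i| \le r_i |A_{i-1}|$, hence $|A_s| \le \prod_{i=1}^s r_i$. Taking the maximum over $v \in V$ completes the proof.

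The main (small) obstacle is simply the bookkeeping in the decomposition step, namely ensuring that empty blocks are handled correctly and that we do not overcount vertices which can be reached ``trivially'' via the convention that $w \in \reach_{G_i,E_i,t_i}(w)$. This convention is in fact convenient here, because it lets us force every temporal path to have exactly $s$ blocks (some possibly of length zero), so the $A_i$ form an increasing chain and the product bound drops out directly from the recursion.
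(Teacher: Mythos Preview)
Your proposal is correct and is essentially the same argument as the paper's: both concatenate the orderings $t_1,\ldots,t_s$ to build $t$, and both bound $|\reach_{G,E,t}(v)|$ by iterating the inclusion $\reach \subseteq \bigcup_{u\in U}\reach_{G_i,E_i,t_i}(u)$, yielding the product $\prod r_i$. The only cosmetic difference is that the paper packages this as an induction on $s$ (peeling off the last block), whereas you unfold the recursion directly via the sets $A_i$.
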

\begin{proof}
We proceed by induction on $s$.  The base case, for $s=1$, is trivially true, so suppose that $s > 1$ and that the result holds for any $s' < s$.

Let $E' = \bigcup_{i = 1}^{s-1} E_i$, and $G' = (V, E')$.  We can then apply the inductive hypothesis to $G'$, $E'$ and $\mathcal{P}' = \{E_1,\ldots,E_{s-1}\}$ to see that there exists an assignment $t'$ of times to the edges in $E'$ such that the maximum reachability of $(G',E',t')$ is at most $\prod_{i=1}^{s-1} r_i$.

Now consider an assignment $t$ of times to the edges in $E$, such that $t(e_j) = t'(e_j)$ if $e_j \in E'$, and otherwise $t(e_j) = t_s(e_j) + |E'|$.  Note that, for any $v \in V$, the set of vertices that $v$ reaches in $(G,E,t)$ after time $|E'|$ is precisely $\reach_{G_s,E_s,t_s}(v)$.

We claim that the maximum reachability of $(G,E,t)$ is at most $\prod_{i=1}^s r_i$, as required.  To see that this is true, we fix an arbitrary vertex $v \in V$, and argue that $|\reach_{G,E,t}(v)| \leq \prod_{i=1}^s r_i$.  Set $U = \reach_{G',E',t'}(v)$ and recall that $|U| \leq \prod_{i=1}^{s-1}r_i$.  Now let $w \in \reach_{G,E,t}(v) \setminus U$, and observe that there must be a strict temporal path from some $u \in U$ to $w$ that uses only edges of $E_s$; equivalently, $w \in \reach_{G_s,E_s,t_s}(u)$.  We therefore see that
$$|\reach_{G,E,t}(v)| = \left|\bigcup_{u \in U} \reach_{G_s,E_s,t_s}(u) \right| \leq |U| \cdot r_s \leq \prod_{i=1}^s r_i, $$
completing the proof.
\end{proof}

We use this result to obtain an upper bound on the maximum reachability that can be achieved, in terms of the edge chromatic number of $G$.  The edge chromatic number of $G = (V,E)$, written $\chi'(G)$, is the smallest $c \in \mathbb{N}$ such that there is a colouring $f: E \rightarrow [c]$ such that $f(e_1) \neq f(e_2)$ whenever $e_1$ and $e_2$ are incident.

\begin{lma}\label{thm:edge-col}
Given any graph $G = (V,E)$, there is assignment $t: E(G) \rightarrow [|E(G)|]$ of times to edges such that the maximum reachability of $(G,E,t)$ is at most $2^{\chi'(G)}$.
\end{lma}
 \begin{proof}
 Fix a proper edge colouring $c: E(G) \rightarrow \chi'(G)$ of $G$, and suppose that $E_i$ is the set of edges receiving colour $i$ under $c$.  Since $G_i = (V,E_i)$ consists of disjoint edges and perhaps isolated vertices, the largest connected component of $G_i$ contains at most two vertices, and hence for any assignment $t_i$ of times to edges in $E_i$ we have that the maximum reachability of $(G_i,E_i,t_i)$ is at most two.  The result now follows immediately from Lemma \ref{lma:reach-decomp}.
 \end{proof}

We now argue that the bound in Lemma \ref{thm:edge-col} is tight for paths.

\begin{prop}\label{prop:path-lb}
Let $P = (V_P,E_P)$ be a path on at least five vertices.  Then the minimum value of the maximum reachability of $(V_P,E_P,t)$, taken over all possible assignments of times to edges, is equal to four.
\end{prop}
 \begin{proof}
The fact that we can achieve maximum reachability at most four follows immediately from Lemma \ref{thm:edge-col}, since any path admits a proper edge-colouring with two colours.  It therefore remains to demonstrate that no assignment of times to edges can achieve maximum reachability strictly less than four.

To see this, let $e_1 = uv$ and $e_2 = vw$ be two incident, non-leaf edges; we may assume without loss of generality that $t(e_1) < t(e_2)$.  Recall from Observation \ref{prop:degree} that $u$ necessarily reaches all of its neighbours, so $|\reach_{V_P,E_P,t}(u) \setminus \{w\}| \geq 3$; however, the fact that $t(e_1) < t(e_2)$ means that there is a strict temporal path from $u$ to $w$ and so $w \in \reach_{V_P,E_P,t}(u)$.  Hence $|\reach_{V_P,E_P,t}(u)| \geq 4$, as required.
 \end{proof}

We conclude this section by using Lemma \ref{thm:edge-col} (combined with Observation \ref{prop:degree}) to derive a polynomial-time approximation algorithm, whose optimisation ratio depends only on the maximum degree of the input graph.

\begin{thm}\label{thm:deg-approx}
Given any graph $G$, we can compute a $\frac{2^{\Delta(G) + 1}}{\Delta(G) + 1}$-approximation to the optimisation version of \singTempOrd\, in time $\mathcal{O}(|E(G)|)$, where $\Delta(G)$ denotes the maximum degree of $G$.  Moreover, we can also compute an assignment of times to edges which achieves this approximation ratio in $\mathcal{O}(|E(G)||V(G)|)$ time.
\end{thm}
 \begin{proof}
 We claim that it suffices to compute the maximum degree $\Delta(G)$ (which can be done in time linear in the number of edges of $G$) and to return $2^{\Delta(G) + 1}$.  To see that this satisfies the requirements, let $\opt(G)$ denote the smallest value of the maximum reachability of $(G,E,t)$ taken over all assignments $t$ of times to edges.  We know from Lemma \ref{thm:edge-col}, together with the fact that the edge chromatic number of any graph is at most the maximum degree plus one, that $\opt(G) \leq 2^{\Delta(G) + 1}$; conversely by Observation \ref{prop:degree} we know that $\opt(G) \geq \Delta(G) + 1$.  Hence 
$$ \opt(G) \leq 2^{\Delta(G) + 1} \leq \frac{2^{\Delta(G) + 1}}{\Delta(G) + 1} \opt(G),$$
as required. 

To see that we can compute a suitable assignment of times to edges in polynomial time, we observe that a $(\Delta + 1)$-edge colouring of any graph can be constructed in $\mathcal{O}(|E(G)||V(G)|)$ time \cite{jayadev1992}; given such a colouring we follow the method of Lemma \ref{thm:edge-col} to construct a suitable assignment of times in an additional $\mathcal{O}(|E(G)|)$ time, giving an overall $\mathcal{O}(|E(G)||V(G)|)$ time complexity.
\end{proof}

\begin{cor}\label{cor:bdd-deg-approx}
Given any graph $G$ of bounded degree, a constant-factor approximation to the optimisation version of \singTempOrd\, can be computed in time $\mathcal{O}(|E(G)|)$.
\end{cor}

We know from Corollary \ref{cor:approx-hard} that we cannot hope to obtain an arbitrarily good approximation in polynomial time (and in particular there is no $\PTAS$ unless $\P$=$\NP$) even on graphs of bounded degree; however, it is natural to ask whether we can nevertheless make a significant improvement on the approximation algorithm we have described here.

\begin{openproblem}
Is there a polynomial-time algorithm to compute a $c$-approximation to the optimisation version of \singTempOrd, where $c > \frac{31}{30}$ is a constant independent of the maximum degree of the input graph?
\end{openproblem}

\subsection{Exact algorithms for special cases}\label{sec:singleton-special}

In this section, we demonstrate that certain restrictions on the input graph lead to efficient exact algorithms in the singleton edge-class setting. Where possible, we have chosen graph classes with special epidemiological motivation.  For example, directed acyclic graphs, while very simple, have the capacity to capture a number of animal trading systems: beef cattle are typically produced by one type of farm, grown by another, and finished by a third.  The pig-rearing industry is similar \cite{schulz2017network}.  Thus, graphs modelling these systems are unlikely to have many disjoint directed cycles over short periods of time; it would therefore be reasonable to assume that such graphs might admit small directed feedback vertex sets, but in the first instance we consider the simplest case in which the graph is in fact a DAG.  We also consider trees as a first step towards understanding the complexity of the problem on graphs of limited treewidth: trees can locally approximate most sparse contact systems, and we have previously found that at least some livestock trading systems can be modelled by graphs of limited treewidth \cite{edge-deletion}.

\subsubsection{Directed acyclic graphs}
We begin by giving a simple necessary and sufficient condition for a yes-instance when the input graph is a DAG, implying that the problem is solvable in polynomial time in this case.

\begin{thm}\label{thm:dag-min-trivial}
Let $(G,\mathcal{E},k)$ be an instance of \singTempOrd\, where $G = (V,\overrightarrow{E})$ is a DAG.  Then $(G,\mathcal{E},k)$ is a yes-instance if and only if $k \geq \Delta^{\out} + 1$, where $\Delta^{\out}$ is the maximum out-degree of $G$.
\end{thm}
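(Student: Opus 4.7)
The plan is to establish the two implications separately; the forward direction falls out of Observation \ref{prop:out-degree}, while the backward direction uses the acyclic structure to exhibit an ordering attaining the bound.

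For the forward implication, suppose $(G,\mathcal{E},k)$ is a yes-instance, and let $v^*$ be a vertex of maximum out-degree. By Observation \ref{prop:out-degree}, under any bijection $t$ we have $\reach_{G,\mathcal{E},t}(v^*) \supseteq N_G^{out}(v^*) \cup \{v^*\}$, so $|\reach_{G,\mathcal{E},t}(v^*)| \geq \Delta^{\out} + 1$, forcing $k \geq \Delta^{\out} + 1$.

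For the backward implication, suppose $k \geq \Delta^{\out} + 1$. The plan is to fix any topological ordering $v_1, \ldots, v_n$ of $G$ (which exists because $G$ is a DAG), and for each $i \in [n]$ to let $S_i \subseteq \overrightarrow{E}$ denote the set of out-edges of $v_i$. I would then choose any bijection $t \colon \overrightarrow{E} \to [|\overrightarrow{E}|]$ such that every edge in $S_i$ receives a later timestep than every edge in $S_j$ whenever $i < j$, so that the out-edges of vertices earlier in the topological ordering are scheduled later. The claim to verify is that $\reach_{G,\mathcal{E},t}(v_i) = \{v_i\} \cup N_G^{out}(v_i)$ for every $i$; this immediately gives maximum reachability at most $\Delta^{\out} + 1 \leq k$, as required. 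The inclusion $\supseteq$ is trivial, and for the reverse inclusion any temporal path from $v_i$ of length at least two must traverse an out-edge $(v_i, v_j) \in S_i$ followed by an out-edge $(v_j, v_\ell) \in S_j$; since $v_j$ is an out-neighbour of $v_i$ in the DAG we have $j > i$, so by construction $t((v_j, v_\ell)) < t((v_i, v_j))$, contradicting the strict-increase requirement along a temporal path.

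No substantive obstacle arises in this proof: the DAG hypothesis hands us a topological ordering, and scheduling the out-edge blocks in reverse order of that ordering ensures every temporal path has length at most one, which makes the lower bound from Observation \ref{prop:out-degree} tight. A secondary benefit of this construction is that the resulting assignment $t$ can plainly be produced in polynomial time (topological sort plus one pass over the edges), which gives the polynomial-time algorithm alluded to in the surrounding discussion.
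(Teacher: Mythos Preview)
Your proof is correct and follows essentially the same approach as the paper: both use Observation~\ref{prop:out-degree} for the lower bound, and both construct the ordering by taking a topological sort and scheduling out-edges of earlier vertices at later times, then observing that any two-edge temporal path would force a vertex to precede one of its in-neighbours in the topological order. The only difference is cosmetic---the paper first lists edges by source-vertex order and then reverses via $t(e_i)=m+1-i$, whereas you describe the resulting property of $t$ directly---but the construction and the contradiction argument are the same.
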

\begin{proof}
We know by Observation \ref{prop:out-degree} that, for any bijection $t \colon \mathcal{E} \rightarrow [|\mathcal{E}|]$, the maximum reachability of $(G,\mathcal{E},t)$ will be at least $\Delta^{\out} + 1$, so if $k \leq \Delta^{\out}$ we will certainly have a no-instance.

Conversely, we argue that there is an ordering $t \colon \mathcal{E} \rightarrow [|\mathcal{E}|]$ such that the maximum reachability of $(G,\mathcal{E},t)$ is at most $\Delta^{\out} + 1$.  Fix a topological ordering $\{v_1,\ldots,v_m\}$ of the vertices of $G$.  We now define a related ordering of the edges: fix any ordering $(e_1,\ldots,e_m)$ of $E(G)$ such that $\overrightarrow{v_iu}$ precedes $\overrightarrow{v_jw}$ whenever $v_i$ precedes $v_j$ in the topological ordering.  If $E_i = \{e_i\}$ for each $i$, we now define $t(e_i) = m + 1 - i \in [m]$.  

We claim that there is no strict temporal path on more than one edge in $(G,\mathcal{E},t)$.  Suppose, for a contradiction that $u,v,w$ is a strict temporal path in $(G,\mathcal{E},t)$.  In this case we must have $t(\overrightarrow{vw}) > t(\overrightarrow{uv})$, which by definition of $t$ means that $\overrightarrow{vw}$ precedes $\overrightarrow{uv}$ in our edge ordering; this implies that $v$ precedes $u$ in the topological ordering of the vertices, but as $v$ is reachable from $u$ this is not possible.  Thus we can conclude that the longest temporal path in $(G,\mathcal{E},t)$ consists of a single edge, so the reachability set of any vertex $v$ is contained in $\{v\} \cup N_G^{out}(v)$, as required.
\end{proof}

We note that the argument used to prove Theorem \ref{thm:dag-min-trivial} cannot easily be extended to graphs with small directed feedback vertex sets: the degree of the vertices we delete to obtain a DAG will be crucial.  We leave the investigation of the this problem parameterised by feedback vertex (or edge) set size as future work.

\subsubsection{Trees}

Recall from Theorem \ref{thm:sing-hard-new} that the problem remains NP-complete on general graphs even when the maximum permitted reachability is bounded by a constant.  We now consider this setting with the additional restriction that the input graph is a tree, and show that we can solve \singTempOrd\, in polynomial time; in fact we give an FPT algorithm with respect to the parameter $k$ (maximum permitted reachability set size).  We use a dynamic programming approach, working from the leaves to an arbitrarily chosen root vertex of the tree. For each vertex $v$, we record a set of states that captures concisely the relevant information about all possible orderings of the edges in the subtree rooted at~$v$.

\begin{thm}\label{thm:trees-poly}
When the underlying graph $G$ is a tree, \singTempOrd\, can be solved in time $n \cdot k^{\mathcal{O}(k)}$.
\end{thm}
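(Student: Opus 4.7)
The approach is bottom-up dynamic programming on the tree, rooted at an arbitrary vertex $r$ and processed in post-order. At each vertex $v$ I compute a set $\mathcal{S}_v$ of \emph{states}; each state concisely summarises an ordering of the edges of the subtree $T_v$ (one that keeps every already-finalised reach in $T_v$ at most $k$), recording only information that could still affect the algorithm's decisions at ancestors of $v$. Write $e_v$ for the edge from $v$ to its parent.

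Three observations make a compact state possible. (a) The maximum downward temporal reach of $v$ within $T_v$ is at most $k-1$, so the set of times at which this reach grows --- the \emph{D-events} --- has at most $k-1$ elements. (b) Any $u\in T_v$ that can temporally reach $v$ in $T_v$ does so along a path whose last edge is $vc$ for some child $c$ of $v$; in a yes-instance $v$ has at most $k-1$ children, so the arrival times $\tau_u$ take at most $k-1$ distinct values --- call these the \emph{U-slots}. (c) Two vertices $u,u'$ sharing the same $\tau_u$ behave identically under every possible future edge placement: either both extend their reach by the same amount (when the next added edge $e$ has $t(e)>\tau_u=\tau_{u'}$) or both are simultaneously finalised. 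Hence, to decide whether the $k$-bound is ever violated for any vertex at a slot $\tau$, it suffices to record only the maximum of $|\reach_{T_v}(u)|$ over all $u$ at slot $\tau$.

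These observations let me define a state at $v$ as a sequence of at most $2k$ slots in time order, each slot carrying one bit (is a D-event present here?) together with an integer $r_{\max}\in\{0,1,\ldots,k\}$ giving the maximum reach among tracked vertices at that slot (or $0$ if there are none). Thus $|\mathcal{S}_v|\le \bigl(2(k+1)\bigr)^{2k} = k^{\mathcal{O}(k)}$. For a leaf the only state is the empty sequence. For an internal vertex with children $c_1,\ldots,c_d$ I absorb the children one at a time. To absorb a child $c$ with state $\sigma'\in\mathcal{S}_c$ into a partial state $\sigma$, I enumerate every interleaving of the two slot-sequences and every insertion position for the new edge $vc$, and build the resulting state: the D-events of $\sigma$ and $\sigma'$ occurring after $vc$ are retained, together with one fresh D-event for $c$ itself at the $vc$ slot; the U-events of $\sigma$ before $vc$ have their $r_{\max}$ increased by $1$ plus the number of D-events of $\sigma'$ after $vc$; the U-events of $\sigma'$ before $vc$ are relocated to the $vc$ slot with $r_{\max}$ increased by $1$ plus the number of D-events of $\sigma$ after $vc$, and combined (by taking a max) with a contribution for $c$ itself; and the U-events of $\sigma'$ at or after $vc$ are dropped after verifying their $r_{\max}\le k$. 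Any combination producing a reach count above $k$, or more than $2k$ surviving slots, is discarded. The root $r$ is a yes-instance iff some state in $\mathcal{S}_r$ has at most $k-1$ D-events and every recorded $r_{\max}$ is at most $k$.

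For the running time, both the number of partial states and $|\mathcal{S}_c|$ are $k^{\mathcal{O}(k)}$, and the number of (interleaving, insertion) pairs to try for each state-pair is $\binom{4k}{2k}(4k+1) = k^{\mathcal{O}(k)}$; so each child-absorption costs $k^{\mathcal{O}(k)}$, and summing over the $n-1$ child-edges of the tree yields the claimed $n\cdot k^{\mathcal{O}(k)}$ total. The main obstacle is to justify observation (c) rigorously and to verify that the merge operation really does maintain the invariant that, at every slot, the recorded $r_{\max}$ upper-bounds the reach of every tracked vertex at that slot; once that invariant is in place, all reach-bound checks reduce to comparing these maxima against $k$, which both ensures correctness and makes it immediate that no non-equivalent orderings of $T_v$ are identified by our state.
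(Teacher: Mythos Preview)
Your high-level strategy---bottom-up dynamic programming on a rooted tree, with a state at each vertex whose size is bounded by $k^{\mathcal{O}(k)}$---matches the paper's. However, your specific state and merge rules contain a genuine error.

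The problem is with your use of ``D-events'' to compute extensions of reach. You implicitly assume that a vertex $u$ which arrives at $v$ at time $\tau$ gains every D-event of $\sigma$ occurring after $\tau$. That is false. A D-event of $\sigma$ at time $\tau'>\tau$ may lie in a subtree $T_{c'}$ whose connecting edge $vc'$ fired at some time $t(vc')<\tau$; then $u$ cannot enter $T_{c'}$ at all, and you have overcounted. Concretely, take $v$ with children $c_1,c_2$ and a grandchild $g$ below $c_1$, with times $t(vc_1)=1$, $t(vc_2)=2$, $t(c_1g)=3$. After absorbing $c_1$ your partial state at $v$ has D-events at $1$ and $3$. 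When you absorb $c_2$, your rule assigns $c_2$ a reach of $1+1+|\{3\}|=3$ (itself, $v$, and the D-event at time $3$), but in fact $\reach(c_2)\cap T_v=\{c_2,v\}$ has size $2$: the edge $vc_1$ is already gone, so $c_2$ never reaches $g$. Your algorithm would therefore reject valid orderings. The same issue affects the ``relocated U-events of $\sigma'$'' rule, and your sentence ``the D-events of $\sigma$ and $\sigma'$ occurring after $vc$ are retained'' also drops D-events of $\sigma$ before $vc$, which is wrong for later bookkeeping.

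The underlying reason is that what a vertex reaching $v$ at time $\tau$ actually gains is $\sum_{c':\,t(vc')>\tau}\beta_{c'}$, and this cannot be recovered from a flat list of D-event times; one needs to know, for each child-edge slot $t(vc')$, the count $\beta_{c'}$. The paper sidesteps the whole difficulty by using a far simpler state: at each vertex $v$ the state is just a pair $(\alpha,\beta)\in[k]^2$, where $\alpha$ is the maximum $T_v$-reach of any vertex that reaches $v$ before $e_{v\uparrow}$, and $\beta$ is the number of vertices reachable from $v$ after $e_{v\uparrow}$. There is no incremental child-absorption: at $v$ one simply enumerates all $d!$ orderings of the edges incident with $v$ together with all $(k^2)^d$ choices of child states, which is $k^{\mathcal{O}(k)}$ since $d\le k-1$. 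Your observation (c) is precisely why the single number $\alpha$ suffices---all vertices reaching $v$ before $e_{v\uparrow}$ get an identical extension---and the paper exploits this directly rather than tracking per-slot information.
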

\begin{proof}
Let $(T=(V_T,E_T),E_T,k)\}$ be the input to our instance of \singTempOrd.  We fix an arbitrary root vertex $v_r \in V_T$, and for each $v \in V_T$ we denote by $T_v$ the subtree of $T$ rooted at $v$.  For every vertex $v \in V_T$ other than the root, we refer to the first edge on the path from $v$ to $v_r$ as $e_{v \uparrow}$.  By Observation \ref{prop:degree}, we may assume that the maximum degree of $T$ is at most $k-1$.

For each vertex $v$ in $V_T$, we record a set of states; a state is a pair $(\alpha, \beta)$ where $\alpha, \beta \in [k]$.  For vertex $v$, we say a state $(\alpha, \beta)$ is \emph{realisable} if there is an ordering of the edges incident at vertices in $T_v$ such that:
\begin{enumerate}
\item $e_{v \uparrow}$ is assigned time $\tau_{\uparrow}$,
\item the largest reachability set within $T_v$ that reaches $v$ before $\tau_{\uparrow}$ is of size~$\alpha$,
\item the number of vertices (including $v$) reachable within $T_v$ from $v$ after $\tau_{\uparrow}$ is exactly $\beta$, and 
\item there is no reachability set within $T_v$ under this temporal ordering of size more than $k$.  
\end{enumerate}
 Observe that there are at most $k^2$ possible states for any vertex.

It is straightforward to generate all realisable states at a leaf $u$: the realisable states here are precisely the states $(1,1)$.  Now consider a non-leaf vertex $v$.  Assume that we have a list of all realisable states for each child of $v$.  We argue that we can efficiently find all realisable states of $v$ from the realisable states of its children.  

We will reason about a joint state $\{(\alpha_1, \beta_1),\ldots,(\alpha_d, \beta_d)\}$ of the children $v_1, \ldots, v_d$ of $v$, along with an ordering $\Pi$ of the edges incident at $v$.   
A state $(\alpha, \beta)$ of a non-leaf non-root vertex $v$ is realisable if and only if there is a joint state $\{(\alpha_1, \beta_1),\dots,(\alpha_d, \beta_d)\}$ of the children of $v$ and an ordering $\Pi$ of the edges incident at $v$ such that: 
\begin{itemize}
\item $\alpha = \max_{i, \Pi(vv_i)< \Pi(e_{v\uparrow})} \left\lbrace \sum_{j, \Pi(vv_j)> \Pi(vv_i)}  \beta_j + \alpha_i + 1 \right\rbrace$
\item $\beta = 1 + \sum_{i, \Pi(vv_i)> \tau_{\uparrow}} \beta_i$ ,
\item for all $1 \leq i \leq d$,  $1 + \alpha_i + \sum_{j, \Pi(vv_j)>  \Pi(vv_i)} \beta_j \leq k$, and
\item $1 + \sum_i \beta_i \leq k$.
\end{itemize}
At the root vertex $v_r$ we follow a similar procedure, but without reference to the edge to a parent: a given state $(\alpha,\beta)$ of $v_r$ is realisable if and only if there exists some joint state $\{(\alpha_1, \beta_1),\ldots,(\alpha_d, \beta_d)\}$ of the children of the root, along with an ordering $\Pi$ of the edges to those children such that:
\begin{itemize}
\item $\alpha = \max_{i} \left\lbrace \sum_{j, \Pi(v_rv_j)> \Pi(v_rv_i)} \beta_j  + \alpha_i + 1 \right\rbrace$
\item $\beta = 1 + \sum_i \beta_i $, and
\item for all $1 \leq i \leq d$,  $1 + \alpha_i + \sum_{j, \Pi(v_rv_j) > \Pi(v_rv_i)}  \beta_j \leq k$.
\end{itemize}
Thus, in order to compute the list of realisable states for a vertex $v$ with $d$ children in the rooted tree, we consider all possible joint child states and determine and store all possible parent states that are consistent with at least one joint child state.  A joint child state consists of an ordering of the edges incident at a vertex, for which there are at most $d!$ possibilities, together with a choice of one of the realisable states for each child (where each child has at most $k^2$ realisable states).  Therefore the number of possible joint child states and ordering combinations to consider is at most $d!k^{2d}$.  For each joint child state and ordering, we can determine the corresponding states of the parent in time $O(d^2)$.  Thus the total time required at each vertex is $\mathcal{O}(d!k^{2d}d^2)$.  By our initial assumption that the maximum degree is at most $k-1$, it follows that the time required at any vertex is $\mathcal{O}(k!k^{2k}k^2) = k^{\mathcal{O}(k)}$.  If any realisable state exists at the root of the tree, there is an acceptable ordering of edges in the tree.  It follows that we can solve \singTempOrd\, in time $n \cdot k^{\mathcal{O}(k)}$.
\end{proof}

Note that the procedure described can easily be adapted to output a suitable ordering of the edges, if one exists: at each vertex $v$, we concatenate the orderings of edges in the subtrees rooted at its children in such a way that the relative ordering $\Pi$ of edges incident with $v$ is preserved.

We conjecture that this approach can be generalised to give an algorithm for \singTempOrd~that is FPT parameterised by treewidth and maximum reachability set size.  

\begin{conjecture}
There is an algorithm for \singTempOrd~that is FPT by treewidth and maximum permissible reachability set size.
\end{conjecture}

To support further work on this question, we suggest a possible state that could be used in a standard dynamic programming approach on a nice tree decomposition (for examples of these approaches and the definition of a nice tree decomposition, we refer the reader to \cite{paramalgs}).

We suggest adapting the state used in the proof of Theorem \ref{thm:trees-poly}, but this becomes much more complicated when dealing with tree decompositions.  Let $B$ be a bag in a nice tree decomposition, and $E_B$ the set of all edges incident at members of $B$.  Let $\mathcal{P}(B)$ be the powerset of $B$, and $\mathcal{D}$ be the set of all functions with domains that are members of $\mathcal{P}(B)$ and ranges that are subsets of $[|E_B|]$.  Then a state for bag $B$ could consist of:
\begin{itemize}
  \item $\Pi$: an assignment of times from $\{1, \ldots, |E_B|\}$ to the members of $E_B$, essentially an ordering of these edges, and
  \item $\mathcal{V}^T$: a function from the cross product $\mathcal{D} \times \mathcal{D}$ to a natural number between 1 and $k$.
\end{itemize}
A state $(\Pi, \mathcal{V}^T )$ for bag $B$ would be \emph{realisable} if there exists an ordering of the edges incident at vertices present in bags of the subtree rooted at $B$ such that:
\begin{itemize}
\item the ordering is consistent with $\Pi$, and
\item under that edge ordering, for every $(U^T, W^T) \in \mathcal{D} \times \mathcal{D}$, we have that 
$\mathcal{V}^T((U^T, W^T))$ is the size of the union of 
\begin{itemize}
\item the largest reachability set of a vertex in $B$ or one of its descendants that reaches exactly those vertices in the domain of $U^T$, by their times assigned by $U^T$, and no other vertices in $B$, and 
\item the size of the joint reachability set of vertices in the domain of $W^T$ after their times assigned by $W^T$. 
\end{itemize}
\end{itemize}
 The need for the complexity of this state comes from one complicating case: essentially the vertex added at an introduce node may be reached by some existing large reachability set and itself also reach a large reachability set, both of which contain vertices that are present only in the descendants of the node, but not at the introduce node's bag itself.  To perform the necessary accounting, we must know the size of the union of these two sets: they may have many vertices in common.  This motivates the requirement for the function $\mathcal{V}^T$, and the consideration of all pairs of subsets of the bags, with times assigned to their members.   We hope that a more elegant approach may be found in future work. 

We leave as an open problem the more general question on trees when no restriction is placed on the maximum permitted reachability set size.
\begin{openproblem}
Is \singTempOrd\ ~$\NP$-complete when restricted to trees?
\end{openproblem}

Although it remains open whether the problem can be solved exactly in polynomial time on trees of unbounded degree, we are able to improve on our approximation algorithm in this special case.  To demonstrate this, we begin by demonstrating that we can always find an ordering of the edges of a tree such that every reachability set is contained in the second neighbourhood of some vertex.  For any vertex $v$, let $N_2[v]$ denote the closed second neighbourhood of $v$ (i.e.~$v$ itself and all vertices at distance at most two from $v$).

\begin{lma}\label{lma:alternating-layers}
Let $T = (V_T, E_T)$ be a tree.  For any vertex $v \in V_T$, denote by $U_v \subset V_T$ the vertices that are at even distance from $v$ in $T$.  There is an assignment $t$ of times to edges in $E_T$ such that, for every vertex $u \in V_T$, there exists $u_v \in U_v$ such that $\reach_{T,E_T,t}(u) \subseteq N_2[u_v]$.
\end{lma}
\begin{proof}
Root the tree $T$ at vertex $v$.  We begin by partitioning the edges of $T$ into two sets: for every edge $e \in E_T$, we call $e$ \emph{even} if the endpoint of $e$ that is closest to $v$ belongs to $U_v$, and \emph{odd} otherwise.  Note that the path between any vertex and one of its descendants in the tree will consist of alternate even and odd edges.  We now define $\mathcal{T}$ based on this partition: fix $t$ to be any assignment of times such that all even edges are assigned times before any odd edge.  Note then that no temporal path can include any odd edge followed by an even edge.  We now argue that the temporal graph $(T,E_T,t)$ has the required properties.

We consider two cases: first, let $u \in U_v$.  In this case, the edge $up_u$ between $u$ and its parent $p_u$ is odd, and thus is assigned a time after every even edge, including the edge from $p_u$ to its parent. Hence the only vertices that $u$ might reach via the edge $up_u$ are $p_u$ and its children, all within the second-neighbourhood of $u$ (note that all other edges incident with children of $p_u$ are even and so are assigned earlier times than the edge from such a child to $p_u$).  We turn our attention to the descendants of $u$: because of the odd/even alternation structure, $u$ can only reach descendants at most distance two from $u$, which are also within its second neighbourhood.  Thus $\reach_{T,E_T,t}(u) \subseteq N_2[u_v]$, as required.

Now, consider a vertex $w \notin U_v$: the edge $wp_w$ between $w$ and its parent is even, and the edges between $w$ and its children are odd.  We will argue that in this case $\reach_{T,E_T,t}(w) \subseteq N_2[p_w]$.  Since the edges from the children of $w$ to its grandchildren are even and are therefore assigned times before the edges from $w$ to its children, the vertex $w$ reaches its children but no other descendants, and so all descendants of $w$ in $\reach_{T,E_T,t}(w)$ also belong to $N_2[p_w]$.  It remains to show that all vertices reachable from $w$ that are not descendants of $w$ also belong to $\reach_{T,E_T,t}$.  Such vertices must be reached via a temporal path that includes $p_w$ and so in particular also belong to $\reach_{T,E_T,t}(p_w)$.  However, we know that $p_w \in U_v$ and so, from the previous case, $\reach_{T,E_T,t}(p_w) \subseteq N_2[p_w]$.  We therefore conclude that $\reach_{T,E_T,t}(w) \subseteq N_2[p_w]$, as required.
\end{proof}

This result immediately gives us an upper bound on the smallest achievable maximum reachability for trees, since the closed second neighbourhood of a vertex $v$ is at most $\Delta^2 +1$, where $\Delta$ denotes the maximum degree of $T$.

\begin{cor}\label{cor:trees-degbound}
Given any tree $T$, there is an assignment $t$ of times to edges of $T$ such that the maximum reachability of $(T,E(T),t)$ is at most $\Delta(T)^2+1$, where $\Delta(T)$ denotes the maximum degree of $T$.
\end{cor}

Moreover, the proof of Lemma \ref{lma:alternating-layers} gives a method for constructing, in linear time, an ordering of the edges under which every reachability set is contained in the closed second neighbourhood of some vertex; combined with the fact (Observation \ref{prop:degree}) that the maximum reachability of a graph with maximum degree $\Delta$ is necessarily at least $\Delta + 1$, this gives rise to a linear-time approximation algorithm for trees.

\begin{cor}\label{cor:trees-approx}
Given any tree $T$, we can compute a $\Delta$-approximation to the optimisation version of \singTempOrd\, in linear time, where $\Delta$ denotes the maximum degree of~$\,T$.  Moreover, we can also compute an assignment of times to edges which achieves this approximation ratio in linear time.
\end{cor}

While this approximation algorithm achieves a much better ratio on general trees than that of Theorem \ref{thm:deg-approx}, we note that Lemma \ref{thm:edge-col} in fact gives a better bound than Corollary \ref{cor:trees-degbound} for trees of maximum degree exactly three (since the edge chromatic number of any tree is equal to the maximum degree, so Lemma \ref{thm:edge-col} gives a bound of $2^\Delta = 8$, compared with a bound of $\Delta^2 + 1 = 10$ from Corollary \ref{cor:trees-degbound}).  In fact, we now show that the bound of Lemma \ref{thm:edge-col} is tight for sufficiently large binary trees.

\begin{lma}\label{prop:binary-lb}
Let $B = (V_B,E_B)$ be a rooted binary tree of depth at least $14$, and let $t$ be any assignment of times to edges of $B$.  Then the maximum reachability of $(B,E_B,t)$ is at least $8$.
\end{lma}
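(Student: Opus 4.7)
The plan is to construct, for any time assignment on the edges of a binary tree of depth at least $13$, some vertex $v$ whose reachability set has size at least $8$. My strategy is iterative: build up a temporal reach of size $4$ as a base case (via Lemma~\ref{prop:path-lb}), and then extend this to a reach of size $8$ by exploiting the binary branching structure at successively deeper levels.

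The base step would come from the path lemma. A binary tree of depth at least $13$ contains a root-to-leaf path of length at least $13 \geq 4$, so by Lemma~\ref{prop:path-lb} some vertex on this path already has reach at least $4$. More precisely, following the proof of that lemma, I can identify two incident non-leaf edges $e_1 = uv$ and $e_2 = vw$ with $t(e_1) < t(e_2)$ such that $u$ has a temporal path through $v$ to $w$, witnessing $\{u, u', v, w\} \subseteq \reach(u)$ where $u'$ is $u$'s other neighbour.

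The next step is to extend this base structure. The key observation is that every non-leaf vertex of a binary tree has degree $3$, so at each such vertex $v$ with neighbours $u_1, u_2, u_3$ and edge times $t(vu_1) < t(vu_2) < t(vu_3)$, the vertex $u_1$ temporally reaches $v, u_2, u_3$, and additionally reaches its own other neighbours. By nesting this ``branching argument'' at three successive depth-levels below (or above) the base structure, each nesting extends the reach set by adding further vertices through additional temporal paths. I would argue via a case analysis on the relative orderings of edge times at each branching point that, regardless of the adversarial assignment, at least one choice of starting vertex produces a temporal ``reach-tree'' of $8$ vertices.

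The main obstacle is controlling the extensions across levels: since times are adversarial, not every branching point actually contributes a new vertex to the reach along a chosen chain. The role of the depth bound $13$ is to guarantee enough ``room'' so that even in the worst case, pigeonhole across the many possible chains identifies one that succeeds three times in a row, pushing the reach past $7$. The precise depth requirement of $13$ presumably arises from balancing the cost per successful extension against the worst-case branching behaviour. I expect the bulk of the proof to be the rigorous verification of this case analysis, rather than any single conceptual step.
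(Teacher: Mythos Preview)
Your proposal is too vague at the crucial step, and the vagueness hides a real obstacle. The difficulty is this: once you have your base vertex $u$ with reach $4$ from the path lemma, you want to ``extend'' by branching at the far end of $u$'s current temporal path. But for that extension to add new vertices to $\reach(u)$, the next edges must be active \emph{later} than the last edge already used --- and the adversary is free to make all edges beyond your current frontier active earlier, killing every extension of $u$'s path. Your pigeonhole sketch (``enough chains that one succeeds three times in a row'') does not address this, because the adversary can block all chains emanating from your fixed $u$ simultaneously. Switching to a different base vertex when an extension fails is the right instinct, but you give no mechanism for doing so coherently.

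The paper's proof supplies exactly the missing anchor. Rather than starting from an arbitrary reach-$4$ configuration and trying to extend it, it first locates an edge $e_1$ that is a \emph{local minimum}: $t(e_1)$ is smaller than the time of every edge incident with $e_1$, and $e_1$ sits at distance at least $4$ from both the root and the leaves. Such an edge is found by walking a root-to-leaf path, always choosing the earlier child edge; if no local minimum appears, the times along the path are unimodal (increase then decrease), and bounding each monotone segment by $3$ (else reach $\geq 8$ already) forces the path to have at most $6$ edges --- this is where the depth bound $13$ enters. Once $e_1 = vv'$ is in hand, its two endpoints share a reachability set that already contains $6$ vertices ($v$, $v'$, and their four other neighbours), so at most one further vertex can be reached. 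A short explicit case analysis on the edge times two levels below $v$ then produces a vertex (not $v$ itself, but one of its grand- or great-grandchildren) with reach at least $8$. The local-minimum edge is the device that replaces your undetermined ``nesting'' step; without it, or something equivalent, the argument does not close.
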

\begin{proof}
We will assume, for a contradiction, that no vertex has a reachability set of size greater than seven in $(B,E_B,t)$.

We claim that $B$ must contain some edge $e_1$ such that:
\begin{enumerate}
\item $t(e_1) < t(e')$ for every edge $e'$ incident with $e_1$,
\item both endpoints of $e_1$ are at distance at least four from any leaves, and
\item both endpoints of $e_1$ are at distance at least four from the root.
\end{enumerate}
To find such an edge, we start at an arbitrary vertex $s$ at distance three from the root (i.e.~at depth four); we will construct a path starting at $s$ which leads away from the root.  Each time we reach a new vertex, we choose our next edge to be the edge (out of the two possibilities) which is assigned the earlier time.  We stop when we reach a vertex $t$ at distance three from a leaf, and call the resulting path $P$.  

We say that an edge $e$ on $P$, not containing $s$ or $t$, is a \emph{minimum edge} if both other edges of $P$ incident with $e$ are assigned times strictly later than that assigned to $e$.  If $P$ does not contain a minimum edge, then it consists of a (possibly empty) segment on which the assigned times of edges increase along the path, followed by a (possibly empty) segment on which the assigned times of edges decrease along the path.  If the first segment contains more than three edges, we have a reachability set of size at least nine (since the first vertex on the path must reach both children of the first four vertices on $P$, including itself).  Similarly, if the second segment contains more than three edges, the last vertex on $P$ will reach the previous four vertices on the path together with each such vertex's child outside $P$, a total of nine vertices.  Therefore we may assume that both segments contain at most three edges.  However, since the tree has depth at least $14$, we can conclude that $P$ contains at least $7$ edges, a contradiction.  We may therefore assume that $P$ contains a minimum edge, $e_1$.

We now argue that $e_1$ has the required properties.  It is clear from the construction of $P$ and the fact that $e_1$ does not contain either endpoint of $P$ that conditions (2) and (3) are satisfied.  Condition (1) follows from the definition of a minimum edge together with the construction of $P$: since $e_1$ is a minimum edge, it must be assigned a time before the earlier of the two incident edges below $e_1$ in the rooted binary tree (as the earlier of these is incident with $e_1$ in $P$); the definition of a minimum edge means that $e_1$ is also assigned a time earlier than the edge above it (the other edge incident with $e_1$ in $P$); if the remaining edge incident with $e_1$ was assigned a time earlier than $e_1$ we would have chosen $P$ to include this edge instead.

We continue the argument using this choice of edge $e_1$; we will reason about a subtree including $e_1$, which is illustrated in Figure \ref{fig:binary}.  Let $v$ and $v'$ be the endpoints of $e_1$, and note that both $v$ and $v'$ must have the same reachability set (applying Observation \ref{prop:first-edge} twice).  It is clear that this reachability set contains $v$, $v'$ and all neighbours of either $v$ or $v'$ (see highlighted vertices in Figure \ref{fig:binary}).  Thus, if the reachability set of $v$ has size at most seven, it can contain at most one more vertex; we may therefore assume without loss of generality that any other vertex in the reachability set of $v$ is at lesser distance from $v'$ than $v$ (otherwise, by symmetry, we may swap the roles of $v$ and $v'$).

\begin{figure}
\centering
\includegraphics[width = 0.3 \linewidth]{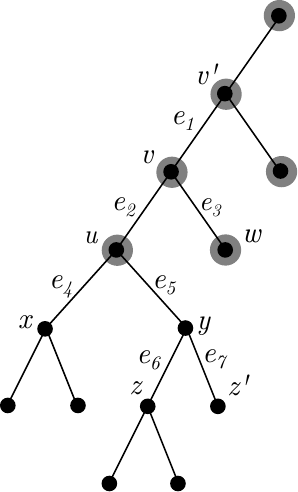}
\caption{A subtree of a binary tree, used to argue that the largest reachability set must have size at least eight.  If $e_1$ is a minimum edge, then all highlighted vertices must belong to the reachability sets of both $v$ and $v'$.}
\label{fig:binary}
\end{figure}

Let $e_2 = vu$ and $e_3 = vw$ be the other two edges containing $v$, where $t(e_2) < t(e_3)$.  Further let $e_4 = ux$ and $e_5 = uy$ be the other two edges containing $u$, with $t(e_4) < t(e_5)$.  By our previous assumptions, $v$ does not reach $x$ or $y$, so we must have that $t(e_4),t(e_5) < t(e_2)$.  Finally, let $e_6 = yz$ and $e_7 = yz'$ be the other two edges containing $y$, where $t(e_6) < t(e_7)$.

Suppose first that $t(e_6) < t(e_5)$; we will argue that in this case the reachability set of $z$ will have size at least eight.  To see this, note that $z$ reaches itself and all its neighbours as well as $z'$, $u$, $v$ and $w$.  Thus we may assume from now on that $t(e_6) > t(e_5)$ (and hence also $t(e_7) > t(e_5)$).  Now consider the reachability set of $x$.  We see that $x$ reaches itself and all its neighbours, as well as $y$, $z$, $z'$, $v$ and $w$, giving $x$ a reachability set of size at least nine, which is a contradiction.  Therefore we conclude that the maximum reachability of $(B,E_B,t)$ is at least $8$, as required.
\end{proof}

\section{The general problem}
\label{sec:general}

In this section we see that the tractable cases we identified in Section \ref{sec:singleton} do not extend to the more general setting where edge classes of cardinality greater than one are allowed.  We begin by complementing Theorem \ref{thm:dag-min-trivial} with a hardness result for DAGs.

\begin{thm}\label{thm:no-list-hardness}
\tempOrd\, is $\NP$-complete, even if $G$ is a DAG whose underlying graph has maximum degree at most 5, $k$ is at most 9, and $|E_i| \leq 3$ for each $E_i \in \mathcal{E}$.
\end{thm}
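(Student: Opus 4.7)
Membership in $\NP$ follows from Corollary \ref{cor:inNP}, so the task is to prove $\NP$-hardness under the stated restrictions. I plan to reduce from \sat, specifically from a variant of \sat\ in which every variable occurs in at most a bounded number of clauses (this variant is well known to remain $\NP$-hard). Given a formula $\phi$ with variables $x_1,\dots,x_n$ and clauses $C_1,\dots,C_m$, I will build a DAG $G$, a multiset $\mathcal{E}$ of edge subsets of size at most $3$, and set $k=9$, such that $(G,\mathcal{E},k)$ is a yes-instance of \tempOrd\ if and only if $\phi$ is satisfiable. All edges will be oriented so that the resulting digraph is acyclic by arranging the construction in topological layers (variable gadgets near the top, clause gadgets below), and I will keep every vertex-degree at most $5$.

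The core idea is to exploit the fact, hinted at by Theorem \ref{thm:dag-min-trivial}, that in the singleton DAG setting a reverse-topological ordering always trivialises reachability; hardness must therefore come from edge classes that couple edges at different topological depths and so cannot all be simultaneously scheduled ``against the flow''. This suggests a variable gadget consisting of two competing edge classes $A_i$ and $B_i$, each containing an edge at the top of the gadget and edges feeding into the clause gadgets, with the relative order of $A_i$ and $B_i$ encoding the truth value of $x_i$: one ordering makes strict temporal paths exist through the ``true'' literal edges, while the other activates the ``false'' literal edges. Because each class has size at most $3$, one variable class can propagate its truth value to at most two literal occurrences at once, so a variable with many occurrences is handled by a standard variable-splitting device: several copies of the variable are introduced, each carrying at most one or two literal edges per class, and tied together by a small consistency chain of classes that can avoid reaching $k+1$ vertices only when all copies take the same value.

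Each clause gadget has an ``evaluation vertex'' $c_j$ of small out-degree which sits below the literal edges of its three literals. The gadget is calibrated so that the reachability set of $c_j$ (or of a small number of auxiliary vertices in the gadget) contains strictly more than $9$ vertices precisely when the three literal edges are all scheduled against the topological flow, i.e.\ when none of $c_j$'s literals is satisfied. Under any ordering corresponding to a satisfying assignment, at least one literal edge in each clause gadget aligns with the topological direction and ``absorbs'' the excess reachability, keeping every reachability set at size at most $9$. Conversely, under an ordering corresponding to a falsifying assignment of some clause $C_j$, I will verify that some vertex in its gadget reaches at least $10$ vertices.

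The hardness of the proof lies entirely in the gadget design: making the numerical accounting work out simultaneously for the bounds $k\le 9$, $\Delta(G)\le 5$, $|E_i|\le 3$, and the DAG requirement. Concretely the main obstacle is to show that every one of the (exponentially many) orderings of $\mathcal{E}$ falls into one of two buckets: either it is ``assignment-like'', corresponding to a truth assignment satisfying $\phi$ and giving max reachability $9$; or it is ``bad'' and creates a vertex of reachability at least $10$, which happens for any non-assignment-like ordering (via the consistency gadget) and for every ordering that falsifies some clause (via the clause gadget). Once the gadgets are in place, correctness of the reduction and verification of the size, degree, and class-size bounds is routine, and combined with Corollary \ref{cor:inNP} this establishes $\NP$-completeness.
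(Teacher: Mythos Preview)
Your high-level template (bounded-occurrence 3-SAT, variable gadgets whose class ordering encodes a truth value, clause gadgets whose evaluation vertex overflows past $k=9$ exactly when all three literals are false) is a reasonable route, and could probably be pushed through. But what you have is only a plan: the entire difficulty lies in the gadget design under the simultaneous constraints $k\le 9$, $\Delta\le 5$, $|E_i|\le 3$, and you have not exhibited a single gadget. In particular, once you introduce variable-splitting and ``consistency chains'' you create new vertices whose reachability you must also bound by $9$, and you give no indication of how to do this within degree $5$ and class size $3$. Also, your topological picture is inverted: you place clause vertices \emph{below} the literal edges, but then the scheduling of those literal edges cannot affect the out-reachability of $c_j$ at all; the natural orientation (and the one the paper uses) has each clause vertex $c_j$ as a source with out-edges into its three literal paths.

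The paper's reduction is architecturally much simpler than what you sketch, and it is worth seeing why. It also starts from $(3,4)$-SAT, but it has \emph{no} variable gadgets, no splitting, and no consistency chains. For each literal $\ell$ there is a single shared directed path $v_{\ell,1}\to v_{\ell,2}\to v_{\ell,3}$, and each clause vertex $c_j$ has out-edges to the three $v_{\ell,1}$'s of its literals; the static reachability of $c_j$ is then exactly $10$, so the question is whether one of the three $v_{\ell,3}$'s can be cut off. The crucial trick is how the edge classes are defined: for each pair $(C_j,\ell)$ with $\ell$ in $C_j$ there are two copies of a class containing $\overrightarrow{c_jv_{\ell,1}}$, the \emph{second} edge of the $\ell$-path, and the \emph{first} edge of the $\neg\ell$-path, and two copies of the complementary class. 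This cross-coupling between the $\ell$ and $\neg\ell$ paths is what enforces consistency for free: the truth value of $x_i$ is recovered from whether the latest-scheduled class involving $x_i$ is of type $E_{\cdot,x_i}$ or $E_{\cdot,\neg x_i}$, with no separate gadgetry needed. Compared with your plan, this avoids the whole consistency layer and makes the numerical bounds ($k=9$, degree $5$, class size $3$) fall out almost immediately.
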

\begin{proof}
We provide a reduction from the following problem, shown to be $\NP$-complete in \cite{tovey84}.

\begin{framed}
\noindent
\textbf{\textsc{(3,4)-SAT}}\\
\textit{Input:} A CNF formula $\Phi$ in which every clause contains exactly three distinct variables, and every variable appears in at most four clauses.\\
\textit{Question:} Is $\Phi$ satisfiable?
\end{framed}

Let $\Phi = C_1 \wedge \cdots \wedge C_m$ be our instance of \textsc{(3,4)-SAT}, and suppose that the variables in $\Phi$ are $x_1,\ldots,x_n$.  We construct an instance $(G,\mathcal{E},k)$ (with the properties in the statement of the theorem) which is a yes-instance if and only if $\Phi$ is satisfiable.

The vertex set of $G$ consists of two sets, $V_{clause} = \{c_j: 1 \leq j \leq m\}$, and $V_{var} = \{v_{x_i,1},v_{x_i,2},v_{x_i,3},v_{\neg x_i,1},v_{\neg x_i,2},v_{\neg x_i,3}: 1 \leq i \leq n\}$.  $G$ contains directed edges $\overrightarrow{v_{x_i,1}v_{x_i,2}}$, $\overrightarrow{v_{x_i,2}v_{x_i,3}}$, $\overrightarrow{v_{\neg x_i,1}v_{\neg x_i,2}}$ and $\overrightarrow{v_{\neg x_i,2}v_{\neg x_i,3}}$ for each $1 \leq i \leq n$; for each $1 \leq j \leq m$, if $C_j = (\ell_1 \vee \ell_2 \vee \ell_3)$, we also have edges $\overrightarrow{c_jv_{\ell_1,1}}$, $\overrightarrow{c_jv_{\ell_2,1}}$ and $\overrightarrow{c_jv_{\ell_3,1}}$.

We now define the set $\mathcal{E}$ of edge-classes.  For each clause $C_j$ and literal $\ell$ appearing in $C_j$, we have four sets in $\mathcal{E}$:
\begin{itemize}
\item two copies of the set $\{\overrightarrow{c_j,v_{\ell,1}},\overrightarrow{v_{\ell,2}v_{\ell,3}},\overrightarrow{v_{\neg \ell,1}v_{\neg \ell,2}}\}$, denoted $E_{C_j,\ell}^{(1)}$ and $E_{C_j,\ell}^{(2)}$, and
\item two copies of the set $\{\overrightarrow{c_j,v_{\ell,1}},\overrightarrow{v_{\ell,1}v_{\ell,2}},\overrightarrow{v_{\neg \ell,2}v_{\neg \ell,3}}\}$, denoted $E_{C_j,\neg \ell}^{(1)}$ and $E_{C_j,\neg \ell}^{(2)}$.
\end{itemize}

We complete the construction of our instance of \tempOrd\, by setting $k=9$.  Part of the construction is illustrated in Figure \ref{fig:dag-construction}.  It is straightforward to verify that $G$ is a DAG whose underlying graph has maximum degree at most 5.

\begin{figure}
\centering
\includegraphics[width = 0.6 \linewidth]{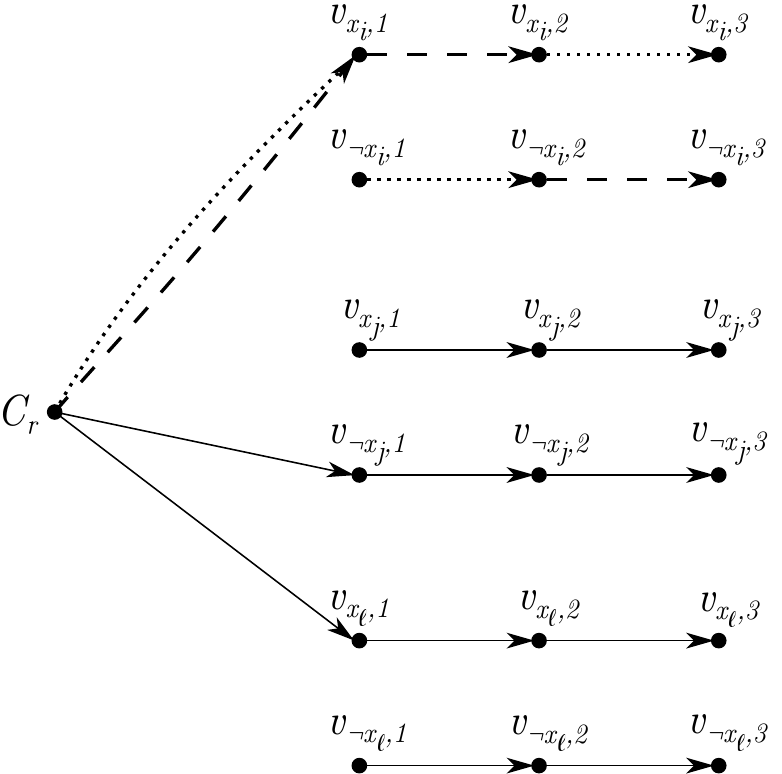}
\label{fig:dag-construction}
\caption{The part of $G$ corresponding to a single clause $C_r = (x_i \vee \neg x_j \vee x_{\ell})$.  Dotted edges belong to $E_{C_r,x_i}^{(1)}$ and $E_{C_r,x_i}^{(2)}$, while dashed edges belong to $E_{C_r,\neg x_i}^{(1)}$ and $E_{C_r,\neg x_i}^{(2)}$.}
\end{figure}

Note that the only vertices with reachability sets of cardinality greater than $3$ in the static directed graph $G$ are those corresponding to clauses, so it suffices to argue that there is a function $t \colon \mathcal{E} \rightarrow [4m]$ such that $|\reach_{G,\mathcal{E},t}(c_j)| \leq 9$ for all $1 \leq j \leq m$ if and only if $\Phi$ is satisfiable.  If $C_j = (\ell_1 \vee \ell_2 \vee \ell_3)$, the reachability set of $c_j$ in the static directed graph $G$ is precisely
$$\{c_j, v_{\ell_1,1}, v_{\ell_1,2}, v_{\ell_1,3}, v_{\ell_2,1}, v_{\ell_2,2}, v_{\ell_2,3}, v_{\ell_3,1}, v_{\ell_3,2}, v_{\ell_3,3}\},$$
which has cardinality $10$.  This gives rise to the following observation.
\begin{obs*}\label{obs:static-subset}
We have $|\reach_{G,\mathcal{E},t}(c_j)| \leq 9$ if and only if the temporal reachability set of $c_j$ in $(G,\mathcal{E},t)$ is a strict subset of its reachability set in $G$.
 \end{obs*}
 
Suppose now that $B \colon \{x_1,\ldots,x_n\} \rightarrow \{\true,\false\}$ is a satisfying assignment for $\Phi$.   Let $t$ be any bijection $\mathcal{E} \rightarrow [4m]$ such that $t(E_{C_j,\ell}^{(1)}), t(E_{C_j,\ell}^{(2)}) \leq 2m$ whenever $B(\ell)$ evaluates to $\true$, and $t(E_{C_j,\ell}^{(1)}), t(E_{C_j,\ell}^{(2)}) \geq 2m+1$ whenever $B(\ell)$ evaluates to $\false$.  Fix an arbitrary clause $C_j$.  Since $B$ is a satisfying assignment for $\Phi$, we know that there is some literal $\ell$ appearing in $C_j$ which evaluates to $\true$ under $B$.  We claim that $v_{\ell,3} \notin \reach_{G,\mathcal{E},t}(c_j)$; by our observation above, this will be sufficient to complete the proof that the maximum reachability is at most $9$.  

To see that this is true, observe that the edge $\overrightarrow{v_{\ell,2}v_{\ell,3}}$ appears only in the sets $E_{C_r,\ell}^{(1)}$ and $E_{C_r,\ell}^{(2)}$ where $C_r$ is a clause that contains the literal $\ell$; since we are assuming that $B(\ell)$ evaluates to $\true$, it follows from the definition that each such set is active only during the first $2m$ timesteps.  On the other hand, $\overrightarrow{v_{\ell,1}v_{\ell,2}}$ appears only in the sets $E_{C_s,\neg \ell}^{(1)}$ and $E_{C_s,\neg \ell}^{(2)}$ where $C_s$ is a clause that contains the literal $\neg \ell$, and so is only active at timesteps greater than or equal to $2m + 1$.  Since the only directed path from $c_j$ to $v_{\ell,3}$ uses the edges $\overrightarrow{v_{\ell,1}v_{\ell,2}}$ and $\overrightarrow{v_{\ell,2}v_{\ell,3}}$ in this order, we see that there cannot be a strict temporal path from $c_j$ to $v_{\ell,3}$ in $(G,\mathcal{E},t)$.  Hence $|\reach_{G,\mathcal{E},t}(c_j)| \leq 9$, as required.

Conversely, suppose that there is a bijection $t \colon \mathcal{E} \rightarrow [4m]$ such that the maximum reachability of $(G,\mathcal{E},t)$ is at most $9$.  We define $\maxtime_t(x_i)$ to be the latest timestep assigned by $t$ to any edge-set of the form $E_{C_j,\ell}^{(r)}$ where $r \in \{1,2\}$ and $\ell \in \{x_i, \neg x_i\}$.  We now define a truth assignment as follows:
\begin{equation*}
B(x_i) = \begin{cases}
				\true 		&\text{if $t^{-1}(\maxtime_t(x_i))$ is of the form $E_{C_j,\neg x_i}^{(r)}$}, \\
				\false		&\text{if $t^{-1}(\maxtime_t(x_i))$ is of the form $E_{C_j,x_i}^{(r)}$.}
		 \end{cases}
\end{equation*}
Now fix an arbitrary clause $C_j$ and suppose that the literal $\ell \in \{x_i,\neg x_i\}$ appears in $C_j$.  We claim that, if $B(\ell)$ evaluates to $\false$, we have $v_{\ell,1},v_{\ell,2},v_{\ell,3} \in \reach_{G,\mathcal{E},t}(c_j)$.  By construction of $B$, we know that $t^{-1}(\maxtime_t(x_i))$ is of the form $E_{C_r,\ell}^{(r)}$ (for some clause $C_r$ which involves the literal $\ell \in \{x_i,\neg x_i\}$) and so includes the edge $\overrightarrow{v_{\ell,2}v_{\ell,3}}$.  By definition of $\maxtime_t(x_i)$, this means there exist distinct timesteps $s_1,s_2 < \maxtime_t(x_i)$ such that $t(E_{C_j,\neg \ell}^{(p)}) = s_p$ for $p \in \{1,2\}$; without loss of generality we may assume that $s_1 < s_2$.  Since $\overrightarrow{c_j,v_{\ell,1}} \in E_{C_j,\neg \ell}^{(1)} = t^{-1}(s_1)$ and $\overrightarrow{v_{\ell,1}v_{\ell,2}} \in E_{C_j,\neg \ell}^{(2)} = t^{-1}(s_2)$, we have a strict temporal path $c_j,v_{\ell,1},v_{\ell,2},v_{\ell,3}$, so we do indeed have $v_{\ell,1},v_{\ell,2},v_{\ell,3} \in \reach_{G,\mathcal{E},t}(c_j)$.  Hence, if every literal in $C_j$ evaluates to $\false$ under $B$, we would have $|\reach_{G,\mathcal{E},t}(c_j)| = 10$, contradicting our assumption that the maximum reachability of $(G,\mathcal{E},t)$ is at most $9$.  Thus we can conclude that every clause contains at least one literal which evaluates to $\true$ under $B$, and so $B$ is a satisfying assignment for $\Phi$.
\end{proof}

Next we show that the general version of the problem is $\W[1]$-hard when parameterised by the vertex cover number of the input graph, even when the input graph is a tree.  While the complexity of the singleton version remains open on trees of unbounded degree, we note that the \singTempOrd\, is trivially in FPT on trees when parameterised by the vertex cover number: it is straightforward to bound the number of non-leaf vertices in any tree by a function of the vertex cover number and so, applying Lemma \ref{lma:leaves-first}, we see that the number of distinct edge orders we need to consider depends only on the vertex cover number.

\begin{thm}\label{thm:trees-hard}
\tempOrd\, is $\W[1]$-hard when parameterised by the vertex cover number of $G$, even if we require $G$ to be a tree.
\end{thm}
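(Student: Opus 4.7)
The plan is to give a parameterised reduction from \textsc{Multicolored Clique}, which is $\W[1]$-hard when parameterised by the size $k$ of the clique sought. Given a graph $G$ with vertex partition $V_1,\dots,V_k$ (and $|V_i|=n$ for all $i$), I would construct a tree $T$, a multiset $\mathcal{E}$ of edge-classes, and a threshold $k^{*}$ such that $(T,\mathcal{E},k^{*})$ is a yes-instance of \tempOrd\ if and only if $G$ contains a multicolored clique, and such that the vertex cover number of $T$ is bounded by a function of $k$.

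The skeleton of $T$ will be a small set of internal ``hub'' vertices joined into a tree by a short backbone; concretely, one hub $h_{ij}$ for each unordered pair $\{i,j\}$ of colour classes, plus $O(k^{2})$ connectors, all of which form the vertex cover. At each hub $h_{ij}$ I will attach two groups of pendant leaves in bijection with $V_i$ and $V_j$, together with a carefully tuned number of padding leaves used only to calibrate reachability sizes. The temporal choice of representative for colour class $i$ will be encoded by bundling, into a single edge-class, the leaves that represent the same vertex $v\in V_i$ across all hubs $h_{ij}$ (for $j\neq i$); the timing of this edge-class then determines, consistently over the whole tree, whether $v$ is the ``selected'' representative of $V_i$. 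Adjacency is then encoded by further edge-classes: for each \emph{non-edge} $\{u,v\}$ of $G$ with $u\in V_i$, $v\in V_j$, a dedicated edge-class is inserted which forces the hub $h_{ij}$ to reach an extra padding leaf exactly when both $u$ and $v$ are selected as representatives, pushing its reachability strictly above $k^{*}$.

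For the verification I would argue in two directions. If $\{v_1,\dots,v_k\}$ is a multicolored clique, I schedule the representative edge-classes so that the right leaves are ``first'' at each hub and schedule all non-edge classes in a safe block; combined with Lemma~\ref{lma:leaves-first} (pushing all leaf edges to the start of the ordering) this gives a feasible ordering with maximum reachability exactly $k^{*}$. Conversely, from any ordering meeting the threshold I would read off, at each hub, which candidate leaf is positioned as ``first''; the padding leaves guarantee this choice is unique and consistent across the hubs $\{h_{ij}\}_{j}$ (hence describes a genuine representative), and the non-edge gadgets guarantee the selected representatives are pairwise adjacent in $G$.

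The hard part will be the accounting. Because the bundled representative classes touch many hubs at once, the reachability effects of any scheduling decision are global, and the padding leaves at each hub must be calibrated so that the threshold $k^{*}$ is simultaneously tight at every hub precisely in the ``valid clique'' configuration. Designing the non-edge gadgets so that they produce an observable reachability excess \emph{only} when both endpoints are selected, without spuriously inflating some other vertex's reachability, is the delicate combinatorial engineering part. Provided this calibration can be done while keeping every auxiliary vertex inside a vertex cover of size $O(k^{2})$, the construction is a valid parameterised reduction, which together with membership in $\NP$ yields the claimed $\W[1]$-hardness.
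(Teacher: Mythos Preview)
Your proposal is a sketch that explicitly defers the crucial steps (``Provided this calibration can be done\ldots''), and the parts that are specified contain real problems. First, you invoke Lemma~\ref{lma:leaves-first} to push leaf edges to the front, but that lemma is stated and proved only for \singTempOrd; in the general problem an edge-class may contain leaf and non-leaf edges simultaneously, so you cannot reschedule leaf edges independently, and the lemma simply does not apply. Second, and more seriously, the ``non-edge gadget'' is the heart of the reduction and you do not say what it is. At a hub $h_{ij}$ that is just a vertex with pendant leaves, Observation~\ref{prop:degree} tells you $h_{ij}$ reaches all of its leaves under \emph{every} ordering, so no scheduling of leaf edge-classes at $h_{ij}$ can change $|\reach(h_{ij})|$ at all; to get any variation you need depth-two structure, but then your backbone, connectors, and padding all interact, and the ``delicate combinatorial engineering'' you allude to is the entire proof. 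Third, your selection mechanism requires a gadget to detect ``$E_u$ is earliest among $\{E_w:w\in V_i\}$ \emph{and} $E_v$ is earliest among $\{E_w:w\in V_j\}$''; it is not at all clear how a tree gadget with $O(1)$ non-leaf vertices can test a conjunction of two such rank conditions.

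For comparison, the paper's reduction is from plain $p$-\textsc{Clique} and is dramatically simpler: the tree is a path $P$ on $k+1$ vertices with endpoints $s,r$, plus $m$ leaves at $r$ (one per edge of $G$) and $n(\binom{k}{2}+1)$ leaves at $s$ (a block per vertex of $G$). There is one edge-class $E_i$ per vertex $v_i$, containing \emph{all} edges of $P$, the $s$-leaves for $v_i$, and the $r$-leaves for edges incident with $v_i$. The path acts as a delay line: $s$ reaches $r$ only after $k$ timesteps, so $s$ fails to reach the $r$-leaf for edge $v_jv_\ell$ precisely when both $E_j$ and $E_\ell$ are scheduled in the first $k$ slots. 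Thus $s$ misses $\binom{k}{2}$ leaves iff the first $k$ classes index a clique; the $s$-leaf blocks ensure every other vertex automatically misses at least $\binom{k}{2}$ vertices. The vertex cover is just $\{s,r\}$ together with alternate internal vertices of $P$, of size at most $k/2+1$. No per-pair hubs, no non-edge gadgets, no calibration: the path-as-delay trick does all the work.
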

\begin{proof}
We prove this result by means of a parameterised reduction from the following problem, shown to be $\W[1]$-hard in \cite{downey95}.

\begin{framed}
\noindent
\textbf{$p$-\textsc{Clique}}\\
\textit{Input:} A graph $G = (V, E)$, and a positive integer $k$.\\
\textit{Parameter:} $k$\\
\textit{Question:} Does $G$ contain a clique on $k$ vertices as a subgraph?
\end{framed}

Let $(G,k)$ be the input to an instance of $\mathbf{p}$-\textsc{Clique}, and suppose that $V(G) = \{v_1,\ldots,v_n\}$ and $E(G) = \{e_1,\ldots,e_m\}$.  We will construct an instance $(G',\{E_1,\ldots,E_h\},k')$ of \tempOrd, such that $(G',\{E_1,\ldots,E_h\},k')$ is a yes-instance for \tempOrd\, if and only if $(G,k)$ is a yes-instance for $\mathbf{p}$-\textsc{Clique}; we will further ensure that the vertex cover number of $G'$ is bounded by a function of $k$.

We construct $G'$ as follows.  Let $P$ be a path on $k+1$ vertices, whose endpoints are denoted $s$ and $r$ respectively.  We obtain $G'$ from $P$ by adding $n\left[\binom{k}{2} + 1\right]$ new leaves $\{u_i^j: 1 \leq i \leq n, 1 \leq j \leq \binom{k}{2} + 1\}$ adjacent to $s$ and $m$ new leaves $\{w_1,\ldots,w_m\}$ adjacent to $r$.  Note that $G'$ has $k+1 + m + n\left[\binom{k}{2} + 1\right] = \mathcal{O}(m + k^2n)$ vertices.

We now define the edge subsets $\mathcal{E} = \{E_1,\ldots,E_n\}$: we have one subset $E_i$ corresponding to each vertex $v_i$ of $G$.  We set
$$E_i = \{e \in E(P)\} \cup \{su_i^j: 1 \leq j \leq \binom{k}{2} + 1\} \cup \{rw_j: e_j \text{ incident with } v_i\}.$$
To complete the construction of our instance of \tempOrd, we set $k' = |G'| - \binom{k}{2}$.  It is clear that we can construct $(G',\mathcal{E},k')$ from $(G,k)$ in polynomial time.

We begin by arguing that $s$ is the only vertex in $G'$ whose reachability set can contain more than $k'$ vertices, regardless of the choice of ordering.

\begin{claim}\label{clm:only-s}
Fix an arbitrary bijective function $t \colon \mathcal{E} \rightarrow [n]$.  Then, for any vertex $x \in V(G') \setminus \{s\}$, $|\reach_{G',\mathcal{E},t}(x)| \leq k'$.
\end{claim}
\begin{proof}[Proof of Claim \ref{clm:only-s}]
Fix $x \in V(G') \setminus \{s\}$.  It suffices to demonstrate that we can find $\binom{k}{2}$ vertices in $V(G')$ that are not in $\reach_{G',\mathcal{E},t}(x)$.

Fix $i$ to be the unique element of $[n]$ such that $E_i = t^{-1}(1)$.  We claim that at most one element of $U = \{u_i^j: 1 \leq j \leq \binom{k}{2} + 1\}$ lies in the temporal reachability set of $x$.  To see that this is the case, note that edges incident with vertices in $U$ are only active at timestep 1, and so if any such edge belongs to a strict temporal path it must be the first edge on such a path; hence there can only be a strict temporal path from some vertex $y$ to a vertex $u \in U$ if $y$ is adjacent to $u$ or $y=u$.  However, as $s$ is the only neighbour of any vertex $u$ and by assumption $x \neq u$, we can only have $u \in U \cap \reach_{G',\mathcal{E},t}(x)$ if $u = x$.  Thus $|U \cap \reach_{G',\mathcal{E},t}(x)| \leq 1$ as claimed.  Since $|U| = \binom{k}{2} + 1$, we conclude that $|\reach_{G',\mathcal{E},t}(x)| \leq k'$, as required.
\renewcommand{\qedsymbol}{$\square$ (Claim \ref{clm:only-s})}
\end{proof}

We will say that the bijective function $t \colon \mathcal{E} \rightarrow [n]$ is \emph{good for $s$} if we have $|\reach_{G',\mathcal{E},t}(s)| \leq k'$.  It follows from Claim \ref{clm:only-s} that $(G',\mathcal{E},k')$ is a yes-instance if and only if some function $t$ is good for $s$.  It therefore remains to show that there is a function $t$ which is good for $s$ if and only if $G$ contains a clique of on $k$ vertices.

To show that this is true, we first give a characterisation of the temporal reachability set of~$s$.

\begin{claim}\label{clm:s-reaches}
Fix an arbitrary bijective function $t \colon \mathcal{E} \rightarrow [n]$.  Then the only vertices of $G'$ that do not belong to $\reach_{G',\mathcal{E},t}(s)$ are vertices $w_i$ such that $e_i = v_jv_{\ell}$ and $t(E_j),t(E_{\ell}) \leq k$.
\end{claim}
\begin{proof}[Proof of Claim \ref{clm:s-reaches}]
First observe that, for any choice of $t$, $\reach_{G',\mathcal{E},t}(s)$ contains
\begin{enumerate}
\item every vertex $u_i^j$ (with $1 \leq i \leq n$ and $1 \leq j \leq \binom{k}{2} + 1$), and
\item every vertex of $P$.
\end{enumerate}
Now consider a vertex $w_i$; by definition, $w_i$ is in $\reach_{G',\mathcal{E},t}(s)$ if and only if there is a strict temporal path from $s$ to $w_i$.  There is only one possible choice of path, and the first $k$ edges on this path are active at every timestep.  Thus we have a strict temporal path from $s$ to $w_i$ if and only if the edge $rw_i$ is active at some timestep after the first $k$.  Since $rw_i$ is active only at $t(E_j)$ and $t(E_{\ell})$, where $e_i = w_jw_{\ell}$, this means that $w_i$ is in the temporal reachability set of $s$ if and only if at least one of $t(E_j)$ and $t(E_{\ell})$ is strictly greater than $k$.  Conversely, the only vertices of $G'$ that are not in $\reach_{G',\mathcal{E},t}(s)$ are the vertices $w_i$ such that $e_i = v_jv_{\ell}$ and $t(E_j),t(E_{\ell}) \leq k$, as required.
\renewcommand{\qedsymbol}{$\square$ (Claim \ref{clm:s-reaches})}
\end{proof}

Now suppose that $G$ contains a clique induced by the vertices $\{v_{i_1},\ldots,v_{i_k}\}$.  We claim that any function $t$ which maps $\{E_{i_1},\ldots,E_{i_k}\}$ to $[k]$ is good for $s$.  By Claim \ref{clm:s-reaches}, we see that the vertices of $G'$ that are not in $\reach_{G',\mathcal{E},t}(s)$ are the vertices $w_i$ such that $e_i = v_jv_{\ell}$ and $E_j,E_{\ell} \in t^{-1}([k]) = \{E_{i_1},\ldots,E_{i_k}\}$.  In other words, the vertices not in the temporal reachability set correspond to edges in $G$ which have both endpoints in the set $\{v_{i_1},\ldots,v_{i_k}\}$.  Since, by assumption, this set of vertices induces a clique, we know that there are precisely $\binom{k}{2}$ such vertices, so the reachability set misses $\binom{k}{2}$ vertices and $t$ is indeed good for $s$.

Conversely, suppose that the function $t$ is good for $s$, and set $C = \{i:t(E_i)  \leq k\}$.  We claim that $\{v_i: i \in C\}$ induces a clique in $G$.  We know by Claim \ref{clm:s-reaches} that the only vertices that do not belong to $\reach_{G',\mathcal{E},t}(s)$ are vertices $w_i$ such that $e_i = v_jv_{\ell}$ and $j,\ell \in C$.  We therefore know, since $t$ is good for $s$, that there must be $\binom{k}{2}$ unordered pairs $\{j,\ell\} \subset C$ such that $G$ contains an edge $v_iv_{\ell}$.  Since the total number of unordered pairs from $C$ is equal to $\binom{k}{2}$, it follows that there is an edge between every pair of vertices in the set $\{v_i: i \in C\}$, implying that this set of $k$ vertices does indeed induce a clique in $G$, as required.

Finally, we note that the vertices $r$ and $s$, together with every second internal vertex on the path $P$, form a vertex cover for $G'$, meaning that the vertex cover number of $G'$ is at most $k/2 + 1$.
\end{proof}

In the two preceding results, the expressive power of \tempOrd\, on highly restricted graph classes comes from the fact that, with edge-classes of size two or more, the decisions made at one location can have an effect on distant parts of the graph.  It is therefore natural to ask whether we can regain some tractability in this setting by placing structural restrictions on the edge-class interaction graph: note that, in the proof of Theorem \ref{thm:trees-hard}, although the graph $G$ we construct is a tree, the edge-class interaction graph is a clique.  

We now build on the results of Section \ref{sec:singleton-approx} to show that suitable restrictions on the edge-class interaction graph can allow the design of efficient approximation algorithms.  It remains open whether there exist efficient algorithms to solve the problem exactly when the edge-class interaction graph is sufficiently highly structured, although the hardness reduction for the singleton case in Theorem \ref{thm:sing-hard-new} already rules out a $\PTAS$ when the maximum degree of the edge-class interaction graph is bounded by a constant.  

Our first step towards an approximation algorithm for the general case is to use Lemma \ref{lma:reach-decomp} to obtain an analogous bound to that of Lemma \ref{thm:edge-col}.  Note that in the singleton case (in which each edge-class consists of a single edge) the edge-chromatic number of $G$ is precisely the chromatic number of the edge-class interaction graph.

\begin{thm}\label{thm:interaction-col-bound}
Let $(G,\mathcal{E},k)$ be an instance of \tempOrd, let $H$ be the edge-class interaction graph of $(G,\mathcal{E})$, and let $d = \max_{E' \in \mathcal{E}} \Delta((V,E'))$ be the maximum number of edges from any one element of $\mathcal{E}$ that are incident with any single vertex of $G$.  In this case there is an assignment $t: \mathcal{E} \rightarrow [|\mathcal{E}|]$ of times to edge classes such that the maximum reachability of $(G,\mathcal{E},t)$ is at most $(d+1)^{\chi(H)}$, where $\chi(H)$ is the chromatic number of $H$.
\end{thm}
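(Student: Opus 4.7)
The plan is to follow the strategy of Theorem \ref{thm:edge-col}, but to replace the proper edge colouring of $G$ with a proper vertex colouring of the edge-class interaction graph $H$, and to partition $\mathcal{E}$ accordingly. First I would fix a proper colouring $\phi \colon [h] \to [\chi(H)]$ of $H$ and let $\mathcal{C}_c = \{E_i : \phi(i) = c\}$ for each colour $c$. By the definition of $H$, any two distinct members of a single colour class $\mathcal{C}_c$ are non-incident, so no vertex of $G$ is incident with edges belonging to two different elements of $\mathcal{C}_c$. (In particular, as a single edge is incident with itself, the members of each $\mathcal{C}_c$ are pairwise disjoint as subsets of $E$.)

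Next I would bound the maximum reachability achievable within a single colour class. Fix $c \in [\chi(H)]$, assign to the classes of $\mathcal{C}_c$ any $|\mathcal{C}_c|$ distinct consecutive timesteps in an arbitrary order, and consider the resulting temporal graph whose underlying edge set is $\bigcup_{E \in \mathcal{C}_c} E$. Every vertex $v$ is incident with edges from at most one $E_i \in \mathcal{C}_c$, and all edges of $E_i$ share a single timestep, so any strict temporal path leaving $v$ within this restricted instance can use at most one edge: after traversing that edge to some $u$, the vertex $u$ is similarly incident only with edges in $E_i$ among $\mathcal{C}_c$, all of which have already been active. Hence the temporal reachability set of $v$ within $\mathcal{C}_c$ has size at most $1 + \deg_{(V,E_i)}(v) \leq d+1$.

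Finally I would invoke a mild adaptation of Lemma \ref{lma:reach-decomp}: the lemma is phrased for a partition of $E$, but its proof goes through verbatim when we partition $\mathcal{E}$ into $\mathcal{C}_1, \ldots, \mathcal{C}_{\chi(H)}$ and place the colour classes in consecutive time intervals. The inductive step shows that if, after the first $c-1$ blocks, a vertex $v$ reaches a set $U$ with $|U| \leq \prod_{c' < c} r_{c'}$, and every vertex reaches at most $r_c$ further vertices during the $c$-th block, then $|\reach_{G,\mathcal{E},t}(v)| \leq |U| \cdot r_c$ by a union bound over $U$. Iterating with $r_c \leq d+1$ yields the desired bound $(d+1)^{\chi(H)}$. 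The only obstacle is checking the adaptation of Lemma \ref{lma:reach-decomp} itself, which is routine since its inductive argument only requires that the timesteps of each part of the partition form a consecutive block and never uses disjointness of the parts as subsets of $E$; in our setting, disjointness within each $\mathcal{C}_c$ is anyway a consequence of $\phi$ being a proper colouring of $H$.
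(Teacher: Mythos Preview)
Your proposal is correct and follows essentially the same approach as the paper: fix a proper vertex colouring of $H$, observe that within each colour class no vertex sees edges from two different edge-classes (so strict temporal paths have length at most one and the reachability is at most $d+1$), and then combine the colour classes multiplicatively via Lemma~\ref{lma:reach-decomp}. The only difference is that you are more explicit than the paper about the need to adapt Lemma~\ref{lma:reach-decomp} from a partition of $E$ to a partition of $\mathcal{E}$; the paper simply invokes the lemma directly without commenting on this, whereas you correctly note that the inductive argument carries over unchanged once the colour classes are placed in consecutive time blocks.
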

 \begin{proof}
 Fix a proper vertex colouring $c:\mathcal{E} \rightarrow [\chi(H)]$ of $H$, and for each $1 \leq i \leq \chi(H)$ let $\mathcal{E}_i$ denote the subset of $\mathcal{E}$ consisting of those elements that receive colour $i$ under $c$.  By Lemma \ref{lma:reach-decomp}, it suffices to argue that for each $i$ there is an assignment $t_i$ of times to elements of $\mathcal{E}_i$ such that the maximum reachability of $(G_i = (V, \bigcup\mathcal{E}_i),\mathcal{E}_i,t_i)$ is at most $d + 1$.

 We note that, by definition of the edge-class interaction graph and a proper colouring, any pair of incident edges in $G_i$ must belong to the same element of $\mathcal{E}$.  Thus, in any connected component of $G_i$, all edges will be assigned the same timestep in any ordering, meaning that no temporal path can contain more than one edge.  It follows that, for each $i$, in $(G_i,\mathcal{E}_i,t_i)$, each vertex reaches only itself and its neighbours.  The required bound on the maximum reachability now follows immediately from the bound on the degree in each element of $\mathcal{E}$, since all edges of $G_i$ incident with a single vertex must belong to a single element of $\mathcal{E}$.
 \end{proof}

Theorem \ref{thm:interaction-col-bound} immediately gives rise to an efficient constant factor approximation algorithm for \tempOrd, whenever it is possible to compute the chromatic number of the edge-class interaction graph (or a constant-factor approximation to this value) efficiently, provided that the maximum degree of any edge class is bounded.  Note that this second condition will certainly be satisfied if the maximum degree of the input graph is bounded.  We obtain the following corollary by recalling that (1) every graph $H$ admits a proper $(\Delta(H) + 1)$-colouring, which can be constructed greedily, and (2) it is possible to construct a proper 2-colouring, if any exists, in linear time.

\begin{cor}
Let $(G = (V, E),\mathcal{E},k)$ be an instance of \tempOrd, suppose that $\max_{E' \in \mathcal{E}} \Delta_{E'} \leq d$, where $\Delta_{E'}$ denotes the maximum degree of the graph $(V(G),E')$, and let $H$ be the edge-class interaction graph of $G$.  Then we can compute, in polynomial time, an approximation to the optimisation version of \tempOrd\, with approximation ratio $d + 1$ if $H$ is bipartite, and approximation ratio $(d+1)^{\Delta(H)}$ otherwise.
\end{cor}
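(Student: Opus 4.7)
The plan is to combine the upper bound of Theorem \ref{thm:interaction-col-bound} with the lower bound on the optimum given by Observation \ref{prop:degree}, noting that the construction underlying Theorem \ref{thm:interaction-col-bound} is in fact polynomial-time once a proper colouring of the edge-class interaction graph $H$ is provided. Let $\opt(G,\mathcal{E})$ denote the minimum, over all bijections $t \colon \mathcal{E} \to [|\mathcal{E}|]$, of the maximum temporal reachability of $(G,\mathcal{E},t)$.

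First I would establish the lower bound $\opt(G,\mathcal{E}) \geq d+1$. By definition of $d$, there is some $E' \in \mathcal{E}$ and some vertex $v \in V$ incident with $d$ distinct edges of $E'$, so $v$ has at least $d$ neighbours in $G$; thus $\Delta(G) \geq d$, and Observation \ref{prop:degree} gives $\opt(G,\mathcal{E}) \geq \Delta(G) + 1 \geq d+1$.

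For the upper bound and the algorithm, I would split into the two cases according to whether $H$ is bipartite. If $H$ is bipartite, a proper $2$-colouring of $H$ can be found in linear time by BFS; if not, a proper $(\Delta(H)+1)$-colouring of $H$ can be produced by a standard greedy procedure in polynomial time. In either case, using the returned colouring and applying the (constructive version of the) argument of Theorem \ref{thm:interaction-col-bound}, we obtain in polynomial time an assignment $t$ whose maximum reachability is at most $(d+1)^2$ in the bipartite case and at most $(d+1)^{\Delta(H)+1}$ in the general case. Combining each of these upper bounds with the lower bound $\opt(G,\mathcal{E}) \geq d+1$ yields approximation ratios of $(d+1)^2/(d+1) = d+1$ and $(d+1)^{\Delta(H)+1}/(d+1) = (d+1)^{\Delta(H)}$ respectively, as required.

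The only point that requires attention is that Theorem \ref{thm:interaction-col-bound} is phrased as an existence statement, so I would have to observe explicitly that its proof is constructive: given a proper colouring $c$ of $H$, Lemma \ref{lma:reach-decomp} is invoked on the partition of $\mathcal{E}$ into colour classes $\mathcal{E}_1,\ldots,\mathcal{E}_{\chi(H)}$, and within each $\mathcal{E}_i$ any ordering of the edge-classes suffices (because incident edges of $G_i = (V,\bigcup \mathcal{E}_i)$ lie in the same element of $\mathcal{E}$ and hence are assigned the same timestep). Concatenating these per-colour orderings as in the inductive step of Lemma \ref{lma:reach-decomp} is straightforward and runs in polynomial time, and this is the main (essentially bookkeeping) obstacle to a clean write-up.
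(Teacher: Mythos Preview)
Your proposal is correct and follows essentially the same approach as the paper: invoke Theorem \ref{thm:interaction-col-bound} with a $2$-colouring (found by BFS) in the bipartite case or a greedy $(\Delta(H)+1)$-colouring otherwise, and divide the resulting upper bound by the lower bound $\opt(G,\mathcal{E}) \geq d+1$ coming from Observation \ref{prop:degree}. The paper leaves the lower bound and the constructiveness of Theorem \ref{thm:interaction-col-bound} implicit, whereas you spell both out; this is a welcome clarification rather than a different argument.
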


\section{Conclusions and future work}

We have shown \tempOrd\, is extremely difficult to solve exactly: it remains intractable even in the special case of pairwise disjoint singleton edge-classes, and even two highly restricted cases (DAGs, and trees parameterised by vertex cover number) which are almost trivial in this singleton case become intractable as soon as larger edge-classes are allowed.  We have already highlighted one key open question that remains regarding the complexity of solving the problem exactly, namely determining the existence or otherwise of a polynomial-time algorithm for trees of unbounded degree.  It would also be interesting, based on the observation that networks of epidemiological interest often have few disjoint directed cycles, to investigate whether our positive result concerning DAGs in the singleton case can be generalised to directed graphs with bounded size feedback vertex~sets.

Given the strength of our hardness results, it is natural to seek approximation algorithms for the optimisation version of the problem, and we have shown that there exist efficient algorithms to compute constant-factor approximations on graphs of bounded degree either in the singleton case or when the edge-class interaction graph is bipartite.  However, the approximation factor of these algorithm increases very fast with the degree of the input graph and, while we have ruled out the existence of a $\PTAS$ even for the singleton version on bounded degree graphs unless $\P=\NP$, it is still possible that there exists a polynomial-time $c$-approximation algorithm for some $c > \frac{31}{30}$ that does not depend on the degree of the graph.

In spite of the many intractability results for the problems considered so far, the overall goal remains to model real-world reordering problems of this form, and to this end it would be relevant to consider a generalisation of \tempOrd\, in which each edge-class has a list of permitted times.  This would correspond to practical restrictions on when each individual event can be scheduled: for example, a particular event might have to be scheduled for a specific day of the week or at a certain time of year (while the ``Spring Bull Sale'' can perhaps be rescheduled within a window of a several weeks, it is probably not acceptable to move it to October).  We believe that our FPT algorithm for trees (parameterised by the maximum permitted reachability) can be adapted to deal with a list version in the singleton case, but it is not even clear whether the singleton version can be solved efficiently on DAGs when arbitrary lists of permitted times are allowed.  A further generalisation would be to associate different costs with the possible times for each edge class (to model the different costs that might be associated with rescheduling events to different times), and seek to minimise the maximum reachability subject to a given budget constraint.

A number of other variations would also be of practical interest.  For certain applications, it might also be relevant to consider the problem of minimising the \emph{average} cardinality of the reachability set over all vertices of the graph, or indeed the expected size of the reachability set (perhaps given some distribution over starting vertices) in a probabilistic model where each edge has an associated transmission probability.  Additionally, previous work on temporal graphs has addressed a notion of $(\alpha,\beta)$-reachability, in which the timesteps at which consecutive edges in a temporal path are active must differ by at least $\alpha$ and at most $\beta$ \cite{temporalReview,temp-edge-del}; this is a more realistic model for the spread of a disease, as individuals are not instantaneously infectious when infected, and do not remain infectious indefinitely.  It would be very interesting to investigate this problem in the reordering context introduced~here.

  \bibliography{ESA_temporal_refs}

\end{document}